\theoremstyle{plain}
\newtheorem{theorem}{Theorem}%[section]
\newtheorem{lemma}[theorem]{Lemma}
\newtheorem{corollary}[theorem]{Corollary}
\newtheorem{conjecture}[theorem]{Conjecture}
\newtheorem{claim}[theorem]{Claim}
\theoremstyle{definition}
\theoremstyle{remark}
\numberwithin{equation}{section}
\begin{document}
\title{On the Number of Cycles in a Graph}
\author{Bader AlBdaiwi \\
%EndAName
\textit{bdaiwi@cs.ku.edu.kw }\\
Computer Science Department, Kuwait University,\\
P.O. Box 5969\\
AlSafat, 13060 \\
Kuwait\\
}
\maketitle
\begin{abstract}
There is a sizable literature on investigating the minimum and maximum
numbers of cycles in a class of graphs. However, the answer is known only
for special classes. This paper presents a result on the smallest number of
cycles in hamiltonian 3-connected cubic graphs. Further, it describes a
proof technique that could improve an upper bound of the largest number of
cycles in a hamiltonian graph.
\end{abstract}

\section{Introduction}

One of the oldest problems in graph theory that in fact goes back to the end
of the 19th century, see~\cite{A}, is the question: ``What are the smallest
and the largest number of cycles in a class of graphs?''

It turns out that it is most convenient to study the largest number of
cycles in a graph $G=(V,E)$ with respect to its cyclomatic number $%
r(G)=\left\vert E\right\vert -\left\vert V\right\vert +1.$ Let $M(r)$ be the
largest number of cycles among all graphs with the cyclomatic number $r$. In
1897 Ahrens showed that $M(r)\leq 2^{r}-1$~\cite{A}. A big step forward is
due to Entringer and Slater~\cite{Ent}. They showed that when studying the
value of $M(r)$ one can confine himself/herself to cubic graphs. Namely,
they proved that there is a cubic graph with the cyclomatic number $r$ and $%
M(r)$ cycles. Also they conjectured that $M(r)$ is asymptotically equal to $%
2^{r-1}$. There are upper bounds on $M(r)$ where the error-term is not
exponential, see for example~\cite{R} and the references given there.
Aldred and Thomassen~\cite{Aldred1} proved that $M(r)\leq \frac{15}{16}%
2^{r}+o(2^{r})$. This is still the only upper bound that improves on the
coefficient of the leading term. As to the lower bounds, the best one so far
was published recently in~\cite{H}, where the error term is exponential.

As for the smallest number of cycles, it has been shown in~\cite{Clark}
that a 2-connected cubic graph of $n$ vertices contains at least $%
(n^{2}+14n)/8$ cycles, and that the bound is best possible. In the same
paper, it was conjectured that the difference between the 2-connected and
3-connected cubic graphs, in a sense, is dramatic. More precisely, it was
conjectured that $f(n)$, the minimum number of cycles in a 3-connected cubic
graph, is superpolynomial. The conjecture was proved in~\cite{Aldred}. It
is shown there that, for $n$ sufficiently large, $%
2^{n^{0.17}}<f(n)<2^{n^{0.95}}$. In the same paper, it is suggested that
replacing the condition 3-connected by the condition cyclically
4-edge-connected may increase the growth of cycles' number to be exponential
in terms of $n$.
R. Aldred has conjectured (unpublished) that restricting the graphs to be
cubic hamiltonian 3-connected might lead to the same property.
The first main result of this paper supports Aldred's conjecture. We
conjecture that a graph $H_{2n},$ defined in this paper, has the smallest
number of cycles among all cubic hamiltonian 3-connected graphs. Then we
show that the number of cycles in $H_{2n}$ grows faster than Fibonacci
sequence. The second result provides a proof technique whose refinement
could lead to an improvement of the upper bound on the largest number of cycles in a
graph.

%
% Cyclomatic Complexity
%
Our research has been motivated by a computer science application.
Cyclomatic complexity is a software metric used to quantify the structure of
a computer program. A program source code can be modeled by a Flow Control
Graph~(FCG) in which nodes and edges represent code blocks and the possible
execution paths among them~\cite{mccabe1}. The cyclomatic complexity is
based on the cyclomatic number of a program FCG. A software of high
cyclomatic complexity indicates a large number of possible execution paths.
Such a software would be difficult to test and expensive to maintain.
As pointed out in~\cite{mccabe2} and~\cite{watson}, one should estimate
cyclomatic complexity in advance during the software design to avoid a
complex code structure. Therefore, investigating the number of cycles in
graphs could help developers and automated code generators in avoiding
structures that may lead to high cyclomatic complexity. It also could help
in defining software design templates that lead to building low
cyclomatic complexity software.

\section{Preliminaries}

In what follows it
is assumed that $G=(V,E)$ is a cubic hamiltonian 3-connected graph, where $%
V=\{v_{0},...,v_{n-1}\},H=v_{0}v_{1}v_{2}...v_{n-1}v_{0}$ is a hamiltonian cycle
of $G$. The set of edges in $G$ but not in $H$ will be denoted by $S$, and
called spokes. We will say that a set of spokes~$F \subseteq S$ forms a cycle if there
is a cycle~$C$ in~$G$ so that all spokes of $F$ belong to $C$ and no other
spoke is in $C.$ Such cycle $C$ will be called an F-cycle.
$v_{i}-v_{j}$ denotes the path $v_{i}v_{i+1}...v_{j}$,
which is a part of $H,$ indices taken modulo $n$. $E(T)$ stand for the edge
set of a graph $T$. The fact that a vertex $v_{k}$ is an internal vertex of
the path $v_{i}-v_{j}$ will be denoted by $v_{i}\prec v_{k}\prec v_{j}.$ The
expression $v_{i}\prec v_{k}\prec v_{j}\prec v_{m}$ is a shorthand for $%
v_{i}\prec v_{k}\prec v_{j}$ and $v_{k}\prec v_{j}\prec v_{m}.$ The next
lemma constitutes a simple but useful observation.

\begin{lemma}
\label{1} Let $F$ be a non-empty set of spokes, $\left\vert F\right\vert =k,$
and let the spokes of $F$ be incident with vertices $%
v_{i_{1}},...,v_{i_{2k}},$ $i_{1}<...<i_{2k}.$ If $F$ forms a cycle $C$ in $%
G,$ then $E(C)= E_1$ or $E(C) = E_2$,
%E_{i},$ $i\in \{1,2\},$
where

$E_{1}=F\cup \bigcup\limits_{j=1}^{k}E(v_{i_{2j-1}}-v_{i_{2j}}),$ and

$E_{2}=F\cup \bigcup\limits_{j=1}^{k-1}E(v_{i_{2j}}-v_{i_{2j+1}})\cup E(v_{i_{2k}}-v_{i_{1}}).$

In particular, $F$ forms at most two cycles in $G.$
\end{lemma}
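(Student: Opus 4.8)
The plan is to exploit the fact that, since $G$ is cubic and $H$ is a hamiltonian cycle, every vertex is incident with exactly one spoke; hence the spokes form a perfect matching, and each of the $2k$ vertices $v_{i_1},\dots,v_{i_{2k}}$ carries its own spoke, which lies in $F$. First I would record the two structural observations that drive everything. At each boundary vertex $v_{i_m}$ the spoke belongs to $C$, so, because $C$ is a cycle and the vertex has degree $3$, exactly one of the two $H$-edges at $v_{i_m}$ is used by $C$. At every vertex $w$ that is internal to one of the arcs defined below, the spoke of $w$ is \emph{not} in $F$, hence not in $C$; so $C$ uses either both or neither of the two $H$-edges at $w$.

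Next I would set up the arcs. The $2k$ vertices split $H$ into the $2k$ consecutive sub-paths $A_1=v_{i_1}-v_{i_2},\ A_2=v_{i_2}-v_{i_3},\ \dots,\ A_{2k-1}=v_{i_{2k-1}}-v_{i_{2k}},\ A_{2k}=v_{i_{2k}}-v_{i_1}$, with indices taken modulo $2k$. Propagating the internal-vertex observation along a single arc shows that each $A_m$ is either wholly contained in $C$ (``on'') or disjoint from $C$ (``off''); the degenerate case of an arc with no internal vertex, where $v_{i_m}$ and $v_{i_{m+1}}$ are consecutive on $H$, is the same dichotomy applied to a single edge.

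Then I would combine the two observations at the boundary vertices. Since the two $H$-edges at $v_{i_m}$ are precisely the last edge of $A_{m-1}$ and the first edge of $A_m$, exactly one of the consecutive arcs $A_{m-1},A_m$ is on. Thus the on/off labels alternate around $H$, and because the number of arcs $2k$ is even this alternation is consistent and leaves exactly two global patterns: the odd-indexed arcs on, or the even-indexed arcs on. The odd pattern consists of the paths $v_{i_{2j-1}}-v_{i_{2j}}$ together with $F$, which is exactly $E_1$; the even pattern consists of the paths $v_{i_{2j}}-v_{i_{2j+1}}$ and $v_{i_{2k}}-v_{i_1}$ together with $F$, which is exactly $E_2$. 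Hence $E(C)=E_1$ or $E(C)=E_2$, and since every $F$-cycle has one of these two edge sets, $F$ forms at most two cycles.

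I expect the main obstacle to be phrasing the propagation and alternation arguments cleanly: making the boundary bookkeeping precise (that the two $H$-edges at $v_{i_m}$ really are the boundary edges of $A_{m-1}$ and $A_m$) and verifying that the parity of the arc indices matches the pairings in $E_1$ and $E_2$ exactly as written, rather than being off by one. I would also note that I do not need to check that $E_1$ and $E_2$ are each genuinely a single cycle; the lemma only asserts that any \emph{given} $F$-cycle has one of these two edge sets, and the ``at most two'' conclusion follows immediately.
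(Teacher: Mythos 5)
Your proposal is correct and takes essentially the same route as the paper's own proof: the paper likewise splits the hamiltonian cycle into the arcs determined by the spoke endpoints and argues that the arcs contained in $C$ must alternate, leaving exactly the two patterns $E_1$ and $E_2$. Your write-up simply makes the alternation step rigorous via the perfect-matching and degree-two observations (and correctly notes that the converse direction, whether $E_1$ or $E_2$ actually induces a single cycle, is not needed), details the paper leaves implicit.
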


\begin{proof}
Let $F$ be a non-empty set of spokes.
It is worth noting that no vertex is incident with both an edge in $E_1-F$ and an edge in $E_2-F$.
Clearly, both $E_{1}$ and $E_{2}$ induce 2-regular graphs. The set $F $
forms a cycle in $G$ if $E_{1}$ or $E_2$ induces a cycle.

It will be shown that there are at
most two ways how to choose the set $H^{\prime }$ of edges from the
hamiltonian cycle $H$ so that $F \cup H^{\prime }$ forms a cycle of $G.$
Suppose that $v_{i},$ $v_{j},$ and $v_{m},$ are vertices of $G$, $v_{i}\prec
v_{j}\prec v_{m}$ so that each of the three vertices is incident to a spoke
of $F,$ and for all $k,i<k<j,$ and $j<k<m,$ the vertex $v_{k}$ is not
incident to a spoke of $F.$ Then either all the edges of the path $%
v_{i}-v_{j}$ belong to $H^{\prime }$ and no edge of the path $v_{j}-v_{m}$
is in $H^{\prime }$ or no edge of the path $v_{i}-v_{j}$ is in $H^{\prime }$
and the path $v_{j}-v_{k}$ is in $H^{\prime }.$ \ In general, if $%
v_{i_{1}},...,v_{i_{2k}},i_{1}<...<i_{2k},$ were vertices of $G$ incident to
a spoke of $F,$ then either all the edges of paths $%
v_{i_{1}}-v_{i_{2}},v_{i_{3}}-v_{i_{4}},...,v_{i_{2k-1}}-v_{i_{2k}},$ would
be in $H^{\prime },$ or all the edges of the paths $%
v_{i_{2}}-v_{i_{3}},v_{i_{4}}-v_{i_{5}},...,v_{i_{2k}}-v_{i_{1}},$ would be
in $H^{\prime }.$ Thus, if $F$ forms a cycle in $G$, then the edge set of $C$
is either $E_{1}$ or $E_{2}$, and consequently, $F$ forms at most two cycles
in $G$.
\end{proof}

\section{Smallest Number of Cycles}

Let $H_{2n}=(V,E)$ be a cubic hamiltonian 3-connected graph
where ${n\geq 2}$, $V=\{v_{0}, v_{1}, ...,v_{2n-1}\},$ and $E$ comprises a
hamiltonian cycle $C=v_{0},v_{1},...,v_{2n-1},v_{0}$ and a set of spokes
$S_{n},$ $\left\vert S_{n}\right\vert = n$ such that:
\begin{eqnarray*}
e_{0} &=&v_{1}v_{2n-1}\in S_{n} \\
e_{i} &=&
          \begin{cases}
            v_{i-1}v_{2n-i-2}\in S_{n},\text{ for }\  1 \le i < n-1,\ i \text{ is odd} \\%\ i \equiv 1\ (\func{mod}2)\\
            v_{i+1}v_{2n-i}\in S_{n},\text{ for }\  2 \le i < n-1,\ i \text{ is even}\\%\equiv 0\ (\func{mod}2)\\
          \end{cases}
          \\ \\
e_{n-1} &=&
\begin{cases}
            v_{n-1}v_{n+1}\in S_{n},\ n \text{ is odd}\\%\text{ for }\  n \equiv 1\ (\func{mod}2)\\
            v_{n-2}v_{n}\in S_{n},\ n \text{ is even}\\%\text{ for }\  n\equiv 0\ (\func{mod}2)\\
\end{cases}
\end{eqnarray*}

See Figure~\ref{fig1} and Figure~\ref{fig2} for $H_{16}$ and $H_{18}.$

\begin{figure}[h]
\begin{center}
\includegraphics[width=0.8\textwidth]{./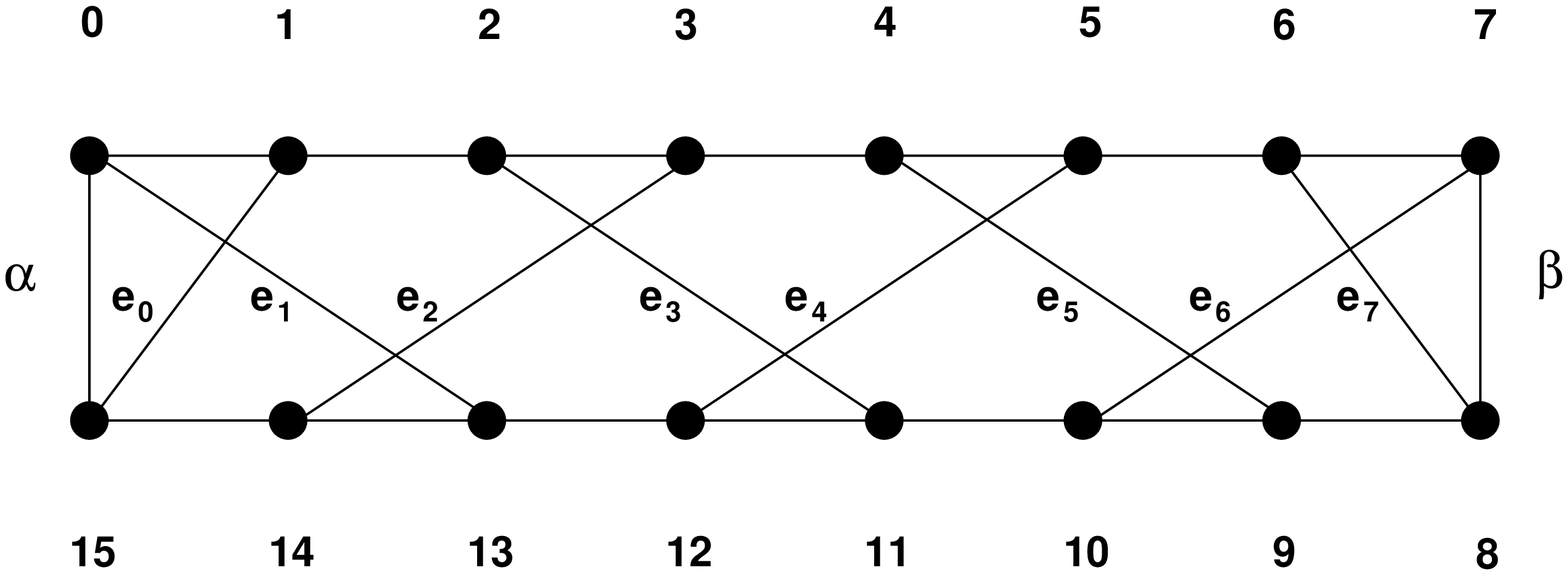}
\end{center}
\caption{$H_{16}$}
\label{fig1}
\end{figure}
\begin{figure}[h]
\begin{center}
\includegraphics[width=0.8\textwidth,
height=0.22\textheight]{./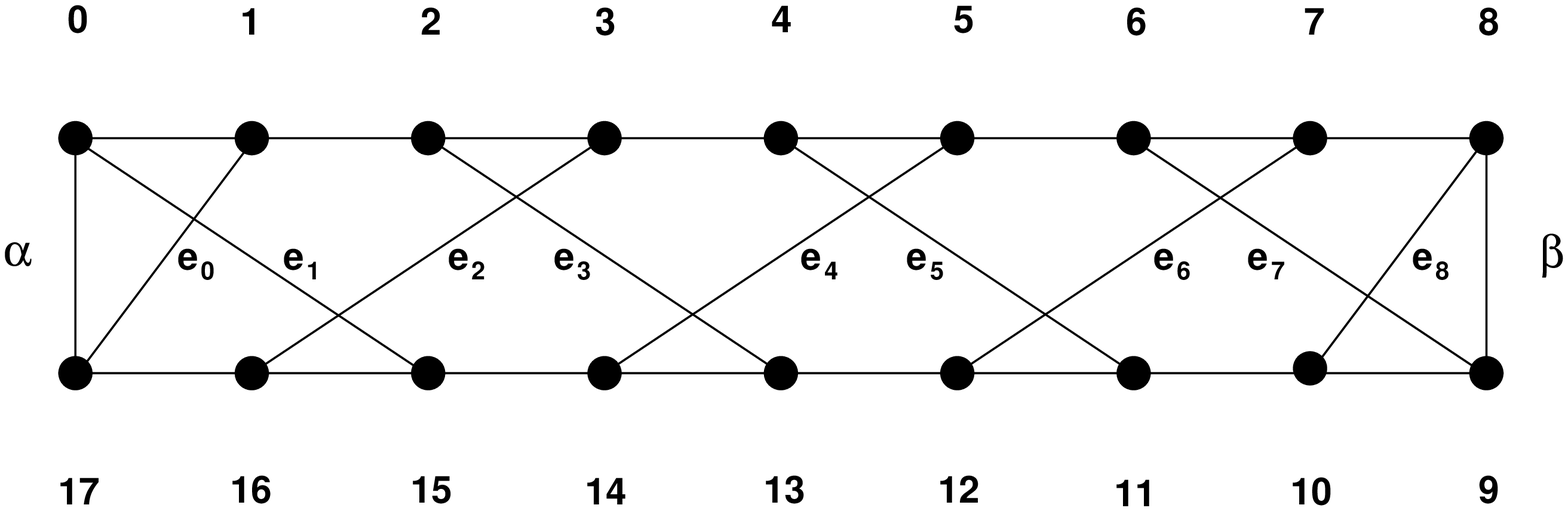}
\end{center}
\caption{$H_{18}$}
\label{fig2}
\end{figure}

We believe that:

\begin{conjecture}
\label{CC}The graph $H_{2n}$ has the smallest number of cycles among all
hamiltonian cubic 3-connected graphs on $n$ vertices.
\end{conjecture}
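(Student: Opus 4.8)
\noindent\textbf{A strategy for Conjecture~\ref{CC}.} The plan is to recast the count of cycles as a functional of a chord diagram and then to prove an extremal inequality for that functional. Fix the hamiltonian cycle $H$; the spoke set $S$ is then a perfect matching of $V$, which we picture as a chord diagram $D$ by placing $v_0,\dots,v_{2n-1}$ in cyclic order on a circle and joining the two ends of each spoke by a chord. Every cycle of $G$ is either $H$ itself or is determined by the non-empty set $F\subseteq S$ of spokes it contains, and by Lemma~\ref{1} such an $F$ gives rise to exactly $c(F)\in\{0,1,2\}$ cycles, where $c(F)$ counts how many of $E_1,E_2$ induce a single cycle. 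Writing $N(G)$ for the number of cycles of $G$,
\[
N(G)=1+\sum_{\varnothing\neq F\subseteq S}c(F),
\]
and $c(F)$ depends only on the sub-diagram of $D$ spanned by the $2|F|$ endpoints of $F$. Thus $N(G)$ is a function of $D$ alone, and Conjecture~\ref{CC} asserts that $D(H_{2n})$ minimises the right-hand side over all admissible diagrams.

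The second step is to identify the admissible diagrams and the behaviour of $c$. I would first prove that a cubic hamiltonian graph is $3$-connected exactly when no proper, non-empty contiguous arc of $H$ is matched internally by its spokes: since the hamiltonian cycle crosses every edge cut an even number of times, a cut of size two must consist of two $H$-edges bounding an interval whose spokes are matched internally, and conversely (for cubic graphs the two notions of $3$-connectivity agree). One checks that this is equivalent to the interlacement graph of $D$ (vertices the spokes, edges the crossing pairs) being connected, so that an admissible $D$ has at least $n-1$ crossings. The decisive structural input is then the behaviour of $c$ under crossing versus nesting, which the small cases already reveal: a single chord gives $c=2$; a crossing pair gives $c=2$ but a non-crossing pair only $c=1$; a nested triple gives $c=0$ whereas a mutually crossing triple gives $c=2$. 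In general I would compute $c(F)$ from the $\mathrm{GF}(2)$ interlacement data of $F$ by a Cohn--Lempel-type argument, the point being that crossings merge the two smoothings into single cycles while nested or parallel chords split them; hence crossings are precisely what $c$ penalises. Already at the level of pairs this gives
\[
\sum_{|F|=2}c(F)=\binom{n}{2}+(\text{number of crossings})\ \ge\ \binom{n}{2}+(n-1),
\]
with equality exactly when the interlacement graph is a tree, which is how $D(H_{2n})$, whose chords are nested as far as admissibility allows, is designed.

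For the value $N(H_{2n})$ I would exploit that nested, tree-like interlacement to peel off the outermost (or the central) spoke and express the cycles of $H_{2n}$ through those of $H_{2(n-1)}$ and $H_{2(n-2)}$. This yields a Fibonacci-type linear recurrence for $N(H_{2n})$, which both pins the target minimum down explicitly and accounts for the paper's claim that the count outgrows the Fibonacci numbers.

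The remaining task, and the genuine obstacle, is the matching lower bound $\sum_F c_D(F)\ge\sum_F c_{H_{2n}}(F)$ for every admissible $D$. Two routes suggest themselves. One is an uncrossing argument: from an arbitrary admissible $D$, repeatedly trade a crossing for a nesting by a local move that never increases $\sum_F c(F)$, aiming to show that the only admissible fixed point is $D(H_{2n})$; the trouble is that uncrossing usually destroys $3$-connectivity, so one must find moves that lower the crossing number \emph{and} preserve admissibility, which cannot always be done directly. The other is induction through a reduction of $G$ to a smaller $3$-connected cubic hamiltonian graph together with an accounting that bounds the resulting loss of cycles by the loss incurred along $H_{2n}$; here the difficulty is the usual one for cubic reductions, namely always locating a cycle-economical reducible configuration while keeping the global interlacement under control. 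I expect this lower bound to be where essentially all the difficulty resides: the local maxim that crossings cost cycles is transparent, but promoting it to a global inequality valid for \emph{every} $3$-connected diagram, and not merely for those reachable from $D(H_{2n})$ by uncrossing, is exactly the content of the conjecture and presumably the reason it is still open.
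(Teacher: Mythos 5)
The statement you set out to prove is a \emph{conjecture}, and the paper does not prove it either: its support consists of the structural observation that the spoke-intersection graph $I(n)$ of $H_{2n}$ is a path, the exact cycle count of $H_{2n}$ (Corollary~\ref{d} and the theorem following it), and an exhaustive computer search confirming minimality for all $2n\le 16$ (Table~\ref{Table1}). Your framework is sound as far as it goes, and in fact it reproduces the paper's own apparatus in different language: your identity $N(G)=1+\sum_{F}c(F)$ is a restatement of Lemma~\ref{1}; your interlacement graph is the paper's $I(n)$; your characterization of $3$-connectivity of a cubic hamiltonian graph by connectivity of the interlacement graph is correct (a $2$-edge-cut meets $H$ in an even number of edges, and it cannot consist of two spokes since $H$ survives their removal, so it consists of two $H$-edges bounding an internally matched arc); and your proposed peeling recurrence is exactly the paper's route to $\alpha_n=\alpha_{n-1}+\alpha_{n-2}+1$ and $c_n=c_{n-1}+\alpha_n$. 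But, as you concede in your last paragraph, the decisive step --- the inequality $\sum_F c_D(F)\ \ge\ \sum_F c_{D(H_{2n})}(F)$ for \emph{every} admissible diagram $D$ --- is not carried out; you only name two candidate routes and the obstruction to each. So this is a research plan, not a proof, and the gap you leave open is the entire content of the conjecture.

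Two concrete points where the plan would need repair even as a plan. First, your order-two inequality $\sum_{|F|=2}c(F)=\binom{n}{2}+\#\{\text{crossings}\}\ \ge\ \binom{n}{2}+(n-1)$ holds with equality for \emph{every} diagram whose interlacement graph is a tree, not only for the path; so it cannot distinguish $H_{2n}$ from, say, a star or caterpillar interlacement, and all discriminating power must come from subsets with $|F|\ge 3$, where $c(F)$ is no longer monotone in the number of crossings (a path-interlaced triple and a mutually crossing triple both give $c=2$, while a triple with a single crossing gives $c\le 1$ by the argument of Claim~\ref{3}.2); the slogan ``crossings cost cycles'' is thus only a second-order statement and cannot be promoted to the global inequality by crossing-counting alone. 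Second, the paper's Table~\ref{Table1} shows the minimizer is far from unique (fourteen extremal graphs already at $2n=16$), so any uncrossing or reduction scheme must terminate at a whole family of fixed points rather than at $D(H_{2n})$ alone, which rules out the cleanest version of your first route. None of this makes your observations wrong --- the order-two bound, the connectivity characterization, and the recurrence are correct and worth keeping as lemmas --- but the conjecture remains exactly as open after your proposal as before it.
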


Let $I(n)\,=(S_{n},E^{'})$ denote the graph where the vertex set is the set of
spokes of $H_{2n}$ and two vertices are adjacent if the corresponding spokes
intersect. Since $H_{2n}$ is 3-connected, $I(n)$ is a connected graph. $I(n)$
being a path is one reason to believe that Conjecture~\ref{CC} is true. Note
that $I(n)$ is a path with the first vertex $e_{0}\,$ and the last $e_{n-1}.$

Let $\alpha$ be the edge $(v_0, v_{2n-1})$, and $\beta$ be the edge $(v_{n-1}, v_n)$.
The following claim states a very important property of the graph $H_{2n}.$

\begin{claim}
Let $H$ be a graph obtained by removing an edge $e\in \{\alpha ,\beta
,e_{0},e_{n-1}\}$ from $H_{2n}$ and suppressing two vertices of degree 2.
Then $H$ is isomorphic to $H_{2n-2}.$
\end{claim}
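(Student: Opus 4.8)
The plan is to prove the isomorphism constructively, by exhibiting in each of the four cases an explicit relabeling of the $2n-2$ surviving vertices that carries the reduced graph onto $H_{2n-2}$. First I would record, for each edge $e \in \{\alpha, \beta, e_0, e_{n-1}\}$, exactly which two vertices drop to degree $2$ after deleting $e$: these are always the two endpoints of $e$. For $e = \alpha$ they are $v_0, v_{2n-1}$; for $e = e_0$ they are $v_1, v_{2n-1}$; for $e = \beta$ they are $v_{n-1}, v_n$; and for $e = e_{n-1}$ they are $\{v_{n-1}, v_{n+1}\}$ or $\{v_{n-2}, v_n\}$ according to the parity of $n$. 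Since exactly two vertices become degree $2$ and they are never adjacent after the deletion, the suppression is unambiguous, and for all $n \ge 3$ (the range in which $H_{2n-2}$ is defined) it creates no multiple edges; the smallest case $n = 3$ can be checked directly.

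Next, for each case I would write down the two edges created by the suppression and read off the resulting Hamiltonian cycle and spoke set. For instance, deleting $\alpha$ and suppressing $v_0, v_{2n-1}$ replaces the spoke $e_1 = v_0 v_{2n-3}$ by the edge $v_1 v_{2n-3}$ and the spoke $e_0 = v_1 v_{2n-1}$ by the edge $v_1 v_{2n-2}$; the path $v_1 v_2 \cdots v_{2n-2}$ together with the edge $v_1 v_{2n-2}$ is the new Hamiltonian cycle, and $\{v_1 v_{2n-3}\} \cup \{e_2, \ldots, e_{n-1}\}$ is the new spoke set, which has the required $n-1$ elements. I would do the analogous bookkeeping for the other three edges; in cases $\alpha$ and $e_0$ the peeled layer sits at the $e_0$-end of the path $I(n)$, while in cases $\beta$ and $e_{n-1}$ it sits at the $e_{n-1}$-end, so within each pair the two reduced graphs differ only in whether a spoke or a cycle edge is absorbed into the new cycle.

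Then I would produce the relabeling. Writing $u_0, \ldots, u_{2n-3}$ for the vertices of $H_{2n-2}$, the map $\psi(v_k) = u_{(2n-2-k)\bmod(2n-2)}$, $k = 1, \ldots, 2n-2$, sends the new Hamiltonian cycle of the $\alpha$-reduction to the Hamiltonian cycle of $H_{2n-2}$ and each new spoke to a spoke of $H_{2n-2}$; this is verified directly on small values of $n$, and the general case is a routine index computation with a parity split. The remaining three cases use the same kind of shift-and-reflection modulo $2n-2$, the only change being that for $\beta$ and $e_{n-1}$ the reflection center depends on the parity of $n$.

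The main obstacle is the index bookkeeping forced by the parity-dependent definition of the innermost spoke. Passing from $H_{2n}$ to $H_{2n-2}$ flips the parity of the bound that governs the ``$i$ odd'' and ``$i$ even'' spoke families, so one must check that these families realign correctly under $\psi$, and this is most delicate near the innermost spoke $e_{n-1}$, whose very formula changes shape. The encouraging point, already visible in case $\alpha$, is that a single reflection formula absorbs the parity flip automatically; I therefore expect the whole argument to reduce to verifying one displayed identity per case (with a parity split only for $\beta$ and $e_{n-1}$), rather than to a genuinely different argument for each parity.
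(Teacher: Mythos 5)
Your proposal is correct, and its key computation checks out: in the $\alpha$ case the reflection $\psi(v_k)=u_{(2n-2-k)\bmod (2n-2)}$ really does carry the reduced graph onto $H_{2n-2}$, sending the merged spoke $v_1v_{2n-3}$ to $e_0'=u_1u_{2n-3}$ and each surviving spoke $e_i$, $2\le i\le n-1$, to $e_{i-1}'$; the reflection absorbs the parity flip exactly as you predict (an odd-indexed spoke $v_{i-1}v_{2n-i-2}$ of $H_{2n}$ lands on the even-indexed spoke $u_{i}u_{2(n-1)-(i-1)}$ of $H_{2n-2}$, and conversely, with the innermost spoke matching under the changed parity of $n-1$). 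The comparison with the paper is lopsided, though: the paper's entire proof consists of pointing to Figure~3 and Figure~4, which illustrate only the cases $e=\alpha$ and $e=e_0$ for $2n=16$, and declaring the cases $\beta$ and $e_{n-1}$ ``analogous.'' So while your strategy is the same in spirit --- a case-by-case, direct exhibition of the isomorphism --- your proposal supplies everything the paper leaves implicit: the identification of the two suppressed vertices and the two merged edges in each case, the resulting Hamiltonian cycle and spoke set, an explicit relabeling valid for all $n\ge 3$, and the degenerate checks (non-adjacency of the suppressed vertices, absence of multiple edges, the base case $n=3$). What the paper's treatment buys is brevity and a visual sense of the self-similar structure of $H_{2n}$; what yours buys is an actual general-$n$ proof, which is of real value here precisely because the parity-dependent definition of $e_{n-1}$ is the kind of detail that a fixed-size figure cannot certify.
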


\begin{proof}
The situation after removing the edge $\alpha$ or $e_0$ is clear from
Figure~\ref{fig3} and Figure~\ref{fig4}, respectively.
The situation after deleting the edge $\beta$ or $e_{n-1}$ is analogous.
%It is clear from Figure~\ref{fig3} and Figure~\ref{fig4}.
\end{proof}

\begin{figure}[h]
\begin{center}
\includegraphics[width=0.8\textwidth]{./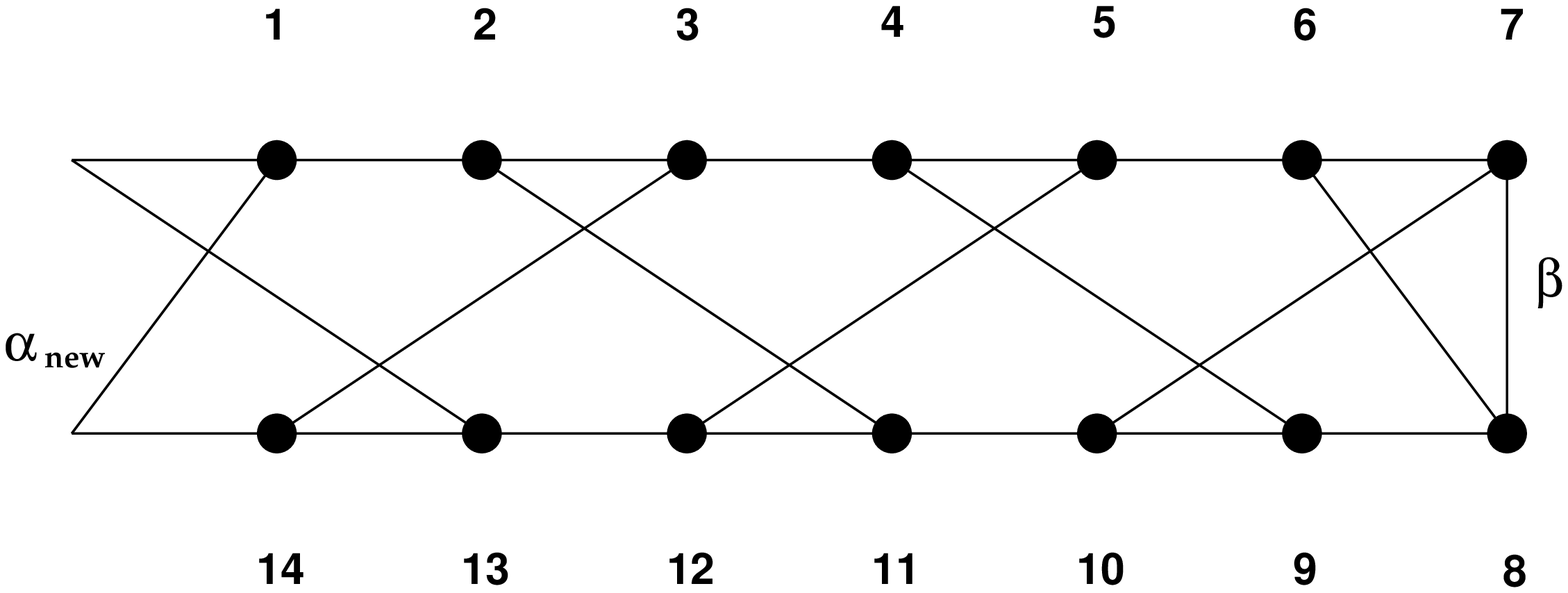}
\end{center}
\caption{$H_{16} - \protect\alpha \simeq H_{14}$}
\label{fig3}
\end{figure}
\begin{figure}[h]
\begin{center}
\includegraphics[width=0.8\textwidth,
height=0.22\textheight]{./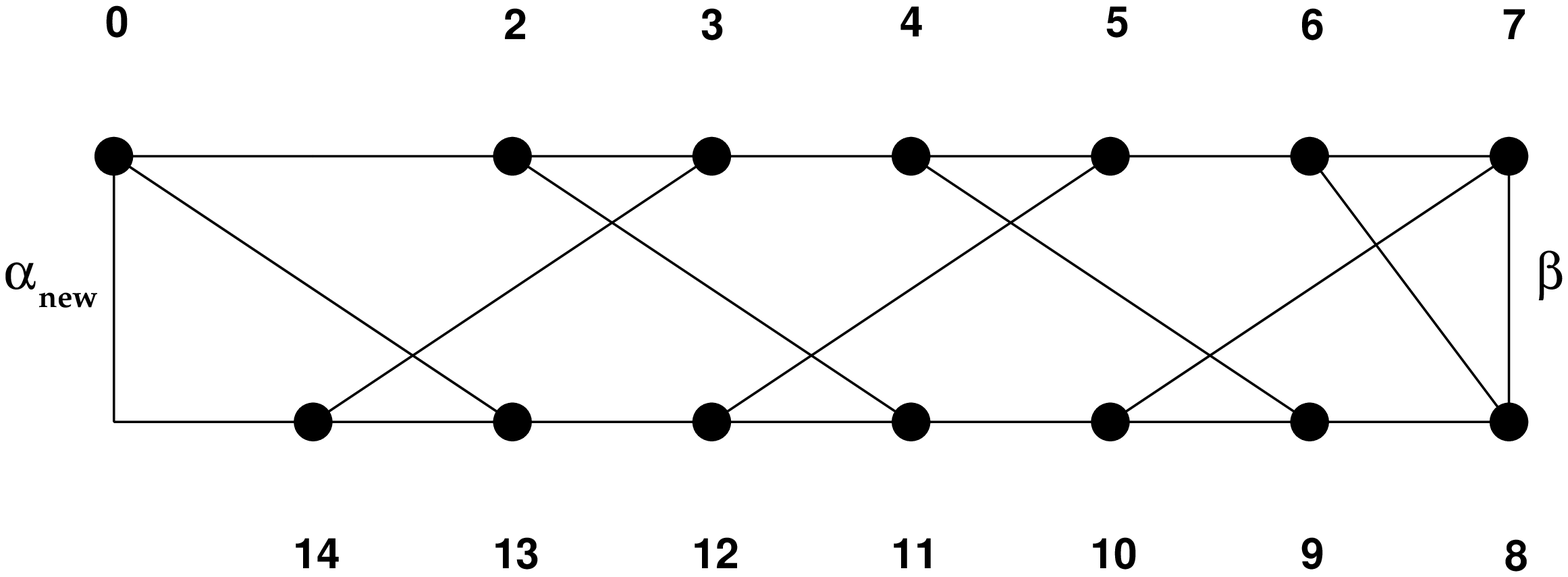}
\end{center}
\caption{$H_{16} - e_0 \simeq H_{14}$}
\label{fig4}
\end{figure}

Let $F \subseteq S_{n}$ be a set of spokes.
The Basic Interval Representation of $F,$ the BIR of $F,$ is a
partition of $F$ into minimal number of parts $F_{1},...,F_{k}$ such that
(i) For each $i,i=1,2,...,k,$ $F_{i}$ is a set of spokes so that the indices
of spokes in $F_{i\text{ }}$ are consecutive numbers, that is, they form an
interval; (ii) If $e_{s}\in F_{i}\,,e_{t}\in F_{j},$ and $i<j,$ then $s<t.$

Clearly, for each set $F$ its BIR is determined in a unique way. The part $%
F_{k\text{ }}$ will be called the last part of BIR of $F.$

If a set of spokes comprises spokes so that their indices form an interval,
then for short $F$ will be called a set of consecutive spokes.

\begin{lemma}
\label{a} Let $F$ be a set of consecutive spokes. Then there are two $F$%
-cycles in $H_{2n}.$ Further, if $\left\vert F\right\vert $ is even, then one
of the two cycles contains both edges $\alpha $ and $\beta ,$ while the
other cycle contains neither $\alpha $ nor $\beta .$ For $\left\vert
F\right\vert $ odd, one of the two cycles contains $\alpha $ and not $\beta
, $ the other contains $\beta $ but not $\alpha .$
\end{lemma}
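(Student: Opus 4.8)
The plan is to use Lemma~\ref{1} to split the statement into two independent claims and dispatch each separately. By Lemma~\ref{1}, if the consecutive set $F=\{e_a,\dots,e_b\}$ with $k=|F|$ forms any cycle, its edge set is one of the two explicit sets $E_1$ or $E_2$, and both $E_1$ and $E_2$ are $2$-regular and contain every spoke of $F$. Hence Lemma~\ref{a} reduces to: (A) each of $E_1,E_2$ is \emph{connected}, i.e.\ a single cycle, so that $F$ forms exactly two cycles; and (B) the stated distribution of $\alpha$ and $\beta$ between these two cycles.

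For part (B) I would avoid case analysis and argue by parity. The key structural fact, immediate from the defining formulas of the spokes, is that every spoke of $H_{2n}$ has exactly one endpoint on the arc $P_L=v_0v_1\cdots v_{n-1}$ and one on the arc $P_R=v_nv_{n+1}\cdots v_{2n-1}$; moreover $\alpha=v_{2n-1}v_0$ and $\beta=v_{n-1}v_n$ are precisely the two edges of $H$ joining $P_L$ and $P_R$. Call an edge a \emph{switch} if it joins the two arcs; then the switches used by $E_1$ are exactly its $k$ spokes together with whichever of $\alpha,\beta$ lies on an arc belonging to $E_1$, and likewise for $E_2$. Since every component of the $2$-regular graph $E_1$ is a closed walk it crosses between $P_L$ and $P_R$ an even number of times, so summing over components gives $[\alpha\in E_1]+[\beta\in E_1]\equiv k \pmod 2$, and the same relation for $E_2$. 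Because $E_1\cap E_2=F$ while $\alpha,\beta\notin F$ both lie in $E_1\cup E_2=F\cup E(H)$, each of $\alpha,\beta$ belongs to exactly one of the two cycles. Combining these two facts yields precisely the claimed dichotomy: for $k$ even the switches $\alpha,\beta$ fall in the same cycle (one cycle gets both, the other neither), and for $k$ odd they are separated.

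The real work is part (A), and this is where I expect the main difficulty. Connectivity cannot follow from the interlacement structure alone, since one can write down chord diagrams whose interlacement graph is connected (indeed complete) yet for which a resolution splits into several cycles; so the explicit geometry of $H_{2n}$ must enter. I would prove (A) by induction on $k$, peeling off the last spoke $e_b$. The hypothesis gives that $\{e_a,\dots,e_{b-1}\}$ resolves into a single cycle in each phase. Since $I(n)$ is a path, $e_b$ crosses $e_{b-1}$ and no earlier spoke, and from the explicit endpoints one checks that the two endpoints of $e_b$ are inserted into the cyclic order immediately beside an endpoint of $e_{b-1}$; thus adding $e_b$ performs a single local re-routing of the existing cycle at the $e_{b-1}$--$e_b$ crossing and keeps it connected.

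The delicate point, and the step I would spend the most care on, is that inserting two new endpoints shifts the odd/even labelling of all arcs, so $E_1$ for the larger set is not literally built from $E_1$ for the smaller one. I would therefore carry both phases through the induction simultaneously and verify, directly from the piecewise definition of the spokes, that the \emph{block graph} (the $2$-regular multigraph obtained by contracting each chosen arc, whose connectedness is equivalent to connectedness of the resolution) is a single $k$-cycle. This verification needs a short case analysis on the parities of the indices and of $n$, and on whether $F$ contains the exceptional spokes $e_0$ and $e_{n-1}$; here the Claim that deleting $e_0$ or $e_{n-1}$ returns $H_{2n-2}$ can be invoked to fold the exceptional endpoints into the inductive step.
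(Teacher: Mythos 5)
Your proposal follows the same top-level reduction as the paper: by Lemma~\ref{1} the only candidate edge sets are $E_1$ and $E_2$, so the lemma amounts to (A) each of them inducing a single cycle and (B) the stated placement of $\alpha$ and $\beta$. The paper disposes of both points in one sentence (``it is easy to check in this case that both $E_1$ and $E_2$ induce a single cycle''), leaving the $\alpha,\beta$ bookkeeping entirely implicit. Your treatment of (B) is genuinely different and is an improvement: the facts that every spoke has one endpoint on $v_0\cdots v_{n-1}$ and one on $v_n\cdots v_{2n-1}$ (true for all spokes, including the exceptional $e_0$ and $e_{n-1}$), that $\alpha$ and $\beta$ are the only edges of $H$ crossing this cut, that a $2$-regular graph crosses any cut an even number of times, and that $E_1,E_2$ share no edge of $H$ while jointly covering $E(H)$, yield the whole dichotomy in one parity computation, with no case analysis on $\left\vert F\right\vert$ or on where $F$ sits inside $S_n$; moreover the argument applies to each resolution separately, whether or not it is connected, so (B) is cleanly decoupled from (A).

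On (A) your instincts are right but the proof is not finished. Your warning that interlacement data alone cannot give connectivity is correct (four pairwise crossing chords have a complete, hence connected, interlacement graph, yet their resolutions split into two cycles), so the explicit geometry of $H_{2n}$ must indeed enter. The structural fact you reach for is also correct, in a stronger form than you state: both endpoints of $e_b$ are adjacent, in the cyclic order restricted to $F$'s endpoints, to one and the same endpoint $z$ of $e_{b-1}$ (its right endpoint for $b$ even, its left endpoint for $b$ odd, and similarly for $e_{n-1}$). Consequently each resolution of $F$ contains exactly one of the two short arcs at $z$, the spoke $e_b$ together with that arc is a degree-two detour between $z$ and a neighbouring $F'$-endpoint, and replacing this detour by the corresponding arc turns a resolution of $F$ into a resolution of $F'=F\setminus\{e_b\}$ with the same number of components; this makes the phase-swap you worry about harmless, since component counts are preserved no matter which phase maps to which. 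But you explicitly defer this verification (``one checks,'' ``a short case analysis''), so the crux of (A) remains a plan in your write-up. To be fair, that is exactly the step the paper itself waves off as easy, so your proposal is no less complete than the published proof and supplies more of the scaffolding needed to make it rigorous.
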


\begin{proof}
The statement follows directly from Lemma~\ref{1}. It is easy to check in
this case that both $E_1$ and $E_2$ induce a single cycle.
\end{proof}

As a direct consequence:

\begin{corollary}
\label{b} Let $F_{1},...,F_{k},k\geq 2,$ be BIR of a set $F$ of spokes of $%
H_{2n}$. Then there is at most one $F$-cycle in $H_{2n}.$ An $F$-cycle $C$
exists iff $\left\vert F_{i}\right\vert $ is even for all $i=2,...,k-1.$
Further, $C$ contains the edge $\alpha $ (the edge $\beta $) if and only if
$\left\vert F_{1}\right\vert $ is even ($\left\vert F_{k}\right\vert $ is
even).
\end{corollary}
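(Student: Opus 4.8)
The plan is to reduce the whole statement to a connectivity analysis of the two candidate edge-sets $E_1,E_2$ supplied by Lemma~\ref{1}, and then to read off the roles of $\alpha$ and $\beta$ from positions alone. First I would record the one structural feature of $H_{2n}$ that makes $\alpha$ and $\beta$ special: every spoke joins the ``left'' arc $L=\{v_0,\dots,v_{n-1}\}$ to the ``right'' arc $R=\{v_n,\dots,v_{2n-1}\}$, and $\alpha,\beta$ are exactly the two edges of $H$ that separate $L$ from $R$. Consequently, if $F$ has $|F|$ spokes with endpoints $v_{i_1}\prec\cdots\prec v_{i_{2|F|}}$, then the first $|F|$ endpoints lie in $L$ and the last $|F|$ in $R$, so $\beta$ sits on the arc between positions $|F|$ and $|F|+1$ while $\alpha$ sits on the wrap-around arc between positions $2|F|$ and $1$. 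In the notation of Lemma~\ref{1} this yields purely positional facts: $\alpha$ lies on an $E_2$-arc and on no $E_1$-arc, while $\beta$ lies on an $E_1$-arc iff $|F|$ is odd and on an $E_2$-arc iff $|F|$ is even.

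This reduces the corollary to the single claim $(\star)$: for $k\ge 2$, at most one of $E_1,E_2$ induces a single cycle; such a cycle exists iff $|F_i|$ is even for every middle block $i=2,\dots,k-1$; and when it exists it is $E_2$ if $|F_1|$ is even and $E_1$ if $|F_1|$ is odd. Granting $(\star)$, the ``at most one $F$-cycle'' assertion is immediate from Lemma~\ref{1}, the existence criterion is exactly the middle-block condition, and the $\alpha,\beta$ statements follow by combining $(\star)$ with the positional facts above and the parity identity $|F|\equiv|F_1|+|F_k|\pmod 2$ (valid once the middle blocks are even): indeed $\alpha\in C$ iff $C=E_2$ iff $|F_1|$ is even, and a short case check on the two possibilities for $C$ gives $\beta\in C$ iff $|F_k|$ is even.

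The heart of the argument is $(\star)$, which I would prove by induction on the number of blocks $k$, taking Lemma~\ref{a} (both $E_1$ and $E_2$ are single cycles) as the base case $k=1$. For the inductive step I would peel off the last block $F_k$ and compare the component structure of $F$ with that of $F'=F_1\cup\cdots\cup F_{k-1}$, whose BIR has $k-1$ blocks. The mechanism to isolate is the parity effect carried across the single gap between $F_{k-1}$ and $F_k$: Lemma~\ref{a} describes, inside a block, exactly how the curve links up (an even block uses both or neither of its two boundary connections, an odd block uses exactly one), so an even block $F_{k-1}$ lets the curve pass through and reconnect, whereas an odd block $F_{k-1}$ desynchronizes the $L$- and $R$-portions and splits off a separate closed curve. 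Making this precise amounts to understanding how the unique arc spanning that gap attaches to the partial cycle built from $F'$, which rests on the interleaving of spoke endpoints forced by the construction of $H_{2n}$ (equivalently, by $I(n)$ being the path $e_0\cdots e_{n-1}$).

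The main obstacle is exactly this last point: controlling connectivity across a gap. The positional facts and the parity bookkeeping are routine, and Lemma~\ref{a} does all the work inside a single block; what needs genuine care is showing that appending $F_k$ changes the number of components by an amount governed solely by the parity of $|F_{k-1}|$ (even: merge, odd: split), and that this forces the loss of connectivity of one of $E_1,E_2$ as soon as $k\ge 2$. I expect the cleanest way to discharge it is to make the left and right endpoint orders of $H_{2n}$ explicit---consecutive spokes occupy adjacent left endpoints for one parity of index and adjacent right endpoints for the other, an offset pattern that is precisely what makes $I(n)$ a path---so that each gap contributes one arc whose effect on the cycle count can be computed directly. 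With that structural input in hand the induction closes, and assembling it with the positional and parity observations yields all three parts of the corollary.
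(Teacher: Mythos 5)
Your reduction is sound and, as far as it goes, correct: the observation that every spoke of $H_{2n}$ joins the arc $L=\{v_0,\dots,v_{n-1}\}$ to the arc $R=\{v_n,\dots,v_{2n-1}\}$, the resulting positional facts ($\alpha$ lies on the wrap-around arc, hence always in $E_2$ and never in $E_1$; $\beta$ lies on the arc between sorted positions $|F|$ and $|F|+1$, hence in $E_1$ iff $|F|$ is odd), and the parity bookkeeping that derives the $\alpha$- and $\beta$-clauses of the corollary from your claim $(\star)$ all check out. The problem is that $(\star)$ \emph{is} the corollary---everything else is routine---and you do not prove it. Your induction on the number of blocks $k$ has base case $k=1$ (Lemma~\ref{a}), so already the step from $k=1$ to $k=2$ requires the merge/split analysis across a gap, and that analysis is exactly what you defer: you describe the expected mechanism (an even block passes the curve through, an odd block splits off a closed component), say that making it precise rests on the interleaving of spoke endpoints, and then assert that ``with that structural input in hand the induction closes.'' That structural input is never supplied, so what you have is a plan whose central step is missing, not a proof.

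There is also a defect in the induction as framed: the inductive hypothesis you carry (whether $E_1(F')$, $E_2(F')$ are single cycles) is too weak to push the step through. When you append $F_k$ to $F'=F_1\cup\dots\cup F_{k-1}$, the new endpoints cut and reroute arcs of $E_j(F')$, so to rule out, say, a disconnected $E_j(F')$ becoming connected after the splice, you must track which endpoints lie on which component of $E_j(F')$, not merely whether it is one cycle. The fix is what you yourself suggest but do not execute: write the endpoint orders down explicitly (for $i<n-1$ the left endpoints of $e_0,e_1,e_2,e_3,e_4,\dots$ fall at indices $1,0,3,2,5,\dots$ and the right endpoints at $2n-1,2n-3,2n-2,2n-5,2n-4,\dots$), note that the gap between blocks forces all endpoints of $F_k$ to occupy a contiguous even-length segment of the sorted order sandwiched between the left and the right endpoints of $F'$, and compute directly how the two arcs bounding that segment attach; the even length of the segment is what preserves the arc pattern on the $F'$ endpoints and makes the bookkeeping close. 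For comparison, the paper disposes of this corollary with the single sentence that it follows directly from Lemma~\ref{a}, so your outline is considerably more explicit than the published argument and follows the same route (per-block use of Lemma~\ref{a} plus gluing across gaps); but judged as a standalone proof it still has a hole exactly where the content lies.
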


\begin{proof}
It follows directly from the above lemma.
\end{proof}

Let $c_{n}$ be the total number of cycles in $H_{2n},$ and let $%
\alpha_{n}$ stand for the number of cycles in $H_{2n}$ containing the edge
$\alpha.$ By inspection, using Corollary~\ref{b}, we get, $%
c_{2}=7,c_{3}=14,c_{4}=26,c_{5}=46,$ and $\alpha _{2}=4,$ $\alpha _{3}=7,$ $%
\alpha _{4}=12,$ and $\alpha_{5}=20.$ Further, let $E_{n}$ and $O_{n}$ be
the number of cycles $C$ in $H_{2n}$ containing the edge $\alpha \,$ so that
if $F_{C}$ is the set of all spokes of $C,$ then the last part in the BIR of
$F_{C}$ is of even, or odd parity, respectively. For short, these cycles
will be called $\alpha \mathit{-even}$ and $\alpha \mathit{-odd}$ cycles,
respectively. By definition,
\[
\alpha_{n}=E_{n}+O_{n}
\]

Clearly, if $F$ is a set of spokes of $H_{2n}$ not containing the last
spoke $e_{n-1}$, then there is an $F$-cycle in $H_{2n}$ iff there is an
$F^{\prime}$-cycle in $H_{2n-2}$.

\begin{lemma}
\label{c}For $n\geq 4,$ (i) $O_{n}=\alpha_{n-1};$ (ii) $E_{n}=\alpha_{n-1}- O_{n-2}.$
\end{lemma}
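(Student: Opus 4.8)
The plan is to recast everything in terms of spoke sets and their Basic Interval Representations, letting Corollary~\ref{b} do the bookkeeping: a set $F$ of spokes forms a cycle through $\alpha$ exactly when its first part $F_1$ is even (and all interior parts are even), and such a cycle contains $\beta$ precisely when its last part is even. Thus $O_n$ counts the cycles through $\alpha$ whose last part is odd, equivalently the cycles containing $\alpha$ but \emph{not} $\beta$, whereas $E_n$ counts the cycles containing both $\alpha$ and $\beta$. The other ingredient I would use is the reduction from the Claim together with the observation preceding the lemma: deleting the last spoke $e_{n-1}$ identifies $H_{2n}$ with $H_{2n-2}$, and under this identification a spoke set $F$ with $e_{n-1}\notin F$ forms a cycle in one graph iff it does in the other, with containment of $\alpha$ and of $\beta$ preserved (the edge $\beta$ of $H_{2n}$ being absorbed into the edge playing the role of $\beta$ in $H_{2n-2}$).

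For part (i) I would argue directly from the Claim applied to $\beta$. The cycles counted by $O_n$ are exactly those of $H_{2n}$ that contain $\alpha$ and avoid $\beta$. Deleting $\beta$ and suppressing the two resulting degree-$2$ vertices yields a graph isomorphic to $H_{2n-2}$ in which $\alpha$ is untouched, so cycles of $H_{2n}$ avoiding $\beta$ correspond bijectively to cycles of $H_{2n-2}$, and this bijection respects containment of $\alpha$. Restricting to cycles through $\alpha$ gives $O_n=\alpha_{n-1}$ at once.

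For part (ii) I would build an injection $\varphi$ from the $E_n$ cycles into the $\alpha_{n-1}$ cycles of $H_{2n-2}$ through $\alpha$, and then identify its image. If $e_{n-1}\notin F$, set $\varphi(F)=F$ viewed in $H_{2n-2}$; by the observation this is again a cycle through $\alpha$ with unchanged (hence even) last part, and as $F$ ranges it hits all $E_{n-1}$ of the $\alpha$-even cycles of $H_{2n-2}$. If $e_{n-1}\in F$, then since the last part is even and ends at $e_{n-1}$ it also contains $e_{n-2}$; set $\varphi(F)=F\setminus\{e_{n-1}\}$. By Corollary~\ref{b} this is still a cycle through $\alpha$, but its last part has lost one element, so it is now odd and ends at $e_{n-2}$; hence $\varphi(F)$ is an $\alpha$-odd cycle of $H_{2n-2}$ whose spoke set contains $e_{n-2}$. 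The two branches have disjoint images (even versus odd last part) and each is invertible onto its image, so $\varphi$ is injective with image equal to the $\alpha$-even cycles of $H_{2n-2}$ together with the $\alpha$-odd cycles of $H_{2n-2}$ containing $e_{n-2}$. The $\alpha$-cycles of $H_{2n-2}$ that are \emph{missed} are precisely the $\alpha$-odd ones omitting $e_{n-2}$; applying the observation once more (now deleting the last spoke $e_{n-2}$ of $H_{2n-2}$) puts these in bijection with the $\alpha$-odd cycles of $H_{2n-4}$, of which there are $O_{n-2}$. Since $\varphi$ is injective, $E_n$ equals the number of $\alpha$-cycles of $H_{2n-2}$ minus the missed ones, i.e. $E_n=\alpha_{n-1}-O_{n-2}$.

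The routine part is Corollary~\ref{b}'s parity dictionary; the delicate part, where I expect to spend the most care, is the $e_{n-1}\in F$ branch of (ii). There one must verify that deleting the single spoke $e_{n-1}$ really produces a valid cycle of $H_{2n-2}$ (the interior parts and $F_1$ stay even, so Corollary~\ref{b} applies), that the last part's parity flips exactly as claimed, and that the map inverts onto the $\alpha$-odd cycles containing $e_{n-2}$. Equally, one must confirm that the isomorphisms of the Claim transport the labels ``$\alpha$-even'' and ``$\alpha$-odd'' faithfully between levels, that is, that $\beta$-containment is genuinely preserved, and one must dispose of the degenerate configurations — a last part of size exactly two, a single-part BIR, and the empty spoke set corresponding to the hamiltonian cycle — so that no cycle is miscounted near the boundary.
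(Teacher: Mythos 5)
Your proposal is correct, but only half of it follows the paper's route. For part (ii) you do essentially what the paper does: delete the last spoke $e_{n-1}$, split the $\alpha$-even cycles of $H_{2n}$ by whether $e_{n-1}$ lies in the spoke set, send the first branch to the $\alpha$-even cycles of $H_{2n-2}$ and the second to the $\alpha$-odd cycles of $H_{2n-2}$ containing $e_{n-2}$. The paper records this as $E_n=E_{n-1}+E_{n-2}$ and then eliminates the $E$'s algebraically using $E_{n-1}=\alpha_{n-1}-O_{n-1}$ and $O_{n-1}=\alpha_{n-2}$; your bookkeeping (image of $\varphi$ equals all $\alpha$-cycles of $H_{2n-2}$ except the $\alpha$-odd ones avoiding $e_{n-2}$, of which there are $O_{n-2}$ by one more application of the same spoke-deletion correspondence) reaches $E_n=\alpha_{n-1}-O_{n-2}$ directly, which is if anything cleaner. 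For part (i), however, you take a genuinely different route: the paper applies the very same $e_{n-1}$-deletion correspondence to the $\alpha$-odd cycles, getting $O_n=E_{n-1}+O_{n-1}=\alpha_{n-1}$ entirely at the spoke-set level, whereas you use the parity dictionary of Corollary~\ref{b} ($\alpha$-odd $\Leftrightarrow$ contains $\alpha$ but not $\beta$) together with the Claim's isomorphism $H_{2n}-\beta\cong H_{2n-2}$. Your version is more conceptual---it explains at a glance why $O_n$ equals the \emph{full} count $\alpha_{n-1}$---but it charges a price the paper's route avoids: the Claim asserts only an abstract isomorphism, so you must additionally verify that it carries $\alpha$ to the edge $\alpha$ of $H_{2n-2}$ (or at least to an edge lying on exactly $\alpha_{n-1}$ cycles). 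This is true, and one can check it from the spoke pattern by relabelling the surviving vertices along the shortened hamiltonian cycle, but it is an extra structural verification, while the paper's correspondence never disturbs the neighbourhood of $\alpha$ and needs only the observation stated just before the lemma. Since you explicitly flag this transport issue and the boundary cases (single-part BIR, the empty spoke set) as the points requiring care, I see no gap---just a heavier verification burden in your part (i) than in the paper's.
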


\begin{proof}
Let $C$ be an $\alpha \mathit{-odd}$ cycle in $H_{2n}.$ If $F_{C},$ the set
of spokes of $C,$ contains the last spoke $e_{n-1},$ then the $F_{C}-e_{n-1}$
cycle is an $\alpha \mathit{-even}$ cycle of $H_{2n-2},$ otherwise $F_{C}$-cycle
forms an $\alpha \mathit{-odd}$ cycle of $H_{2n-2}.$
Thus, $O_{n}=E_{n-1}+O_{n-1}= \alpha_{n-1}$ and (i) follows.
For $C$ being an $\alpha \mathit{-even}$ cycle,
if $F_{C}$ does not contain the last spoke $e_{n-1}$,
then $F_{C}$-cycle forms an $\alpha \mathit{-even}$ cycle of $H_{2n-2},$ otherwise $F_{C}-e_{n-1}$
cycle forms an $\alpha \mathit{-odd}$ cycle of $H_{2n-2}$ that contains $e_{n-1}$ the last
spoke of $H_{2n-2}.$
Hence,
$E_n = E_{n-1} +E_{n-2} =(\alpha_{n-1} - O_{n-1})+(\alpha_{n-2} - O_{n-2})= \alpha_{n-1} - O_{n-2}$, since $O_{n-1}=\alpha_{n-2}$.
\end{proof}

\begin{corollary}
\label{d}For $n\geq 4,$ $\alpha _{n}=\alpha _{n-1}+\alpha _{n-2}+1.$
\end{corollary}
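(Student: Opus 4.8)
The plan is to derive the recurrence $\alpha_n = \alpha_{n-1} + \alpha_{n-2} + 1$ directly from the two identities established in Lemma~\ref{c}, together with the decomposition $\alpha_n = E_n + O_n$. This is essentially a bookkeeping argument: all the combinatorial work has already been done in Lemma~\ref{c}, so the remaining task is purely algebraic.

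First I would write $\alpha_n = E_n + O_n$ and substitute the two formulas from Lemma~\ref{c}, giving
\[
\alpha_n = \bigl(\alpha_{n-1} - O_{n-2}\bigr) + \alpha_{n-1} = 2\alpha_{n-1} - O_{n-2}.
\]
This introduces the stray term $O_{n-2}$, and the heart of the calculation is to re-express it in terms of $\alpha$-values so that the recurrence closes. By part~(i) of Lemma~\ref{c} applied at index $n-2$, I have $O_{n-2} = \alpha_{n-3}$, but it is more convenient to eliminate it using $O_{n-1} = \alpha_{n-2}$ from part~(i) together with the relation between $\alpha_{n-1}$ and its constituents. Writing $\alpha_{n-1} = E_{n-1} + O_{n-1} = E_{n-1} + \alpha_{n-2}$, and then using $E_{n-1} = \alpha_{n-2} - O_{n-3}$ from part~(ii), one sees how the $O$-terms telescope.

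The cleanest route is to observe that part~(ii) of Lemma~\ref{c} reads $E_n = \alpha_{n-1} - O_{n-2}$, so $O_{n-2} = \alpha_{n-1} - E_n$. Substituting this back into $\alpha_n = 2\alpha_{n-1} - O_{n-2}$ yields $\alpha_n = \alpha_{n-1} + E_n$, and then a second application of part~(ii), namely the intermediate identity $E_n = E_{n-1} + E_{n-2}$ appearing in its proof, together with $\alpha_{n-1} = E_{n-1} + O_{n-1}$ and $O_{n-1} = \alpha_{n-2}$, should collapse everything to $\alpha_{n-1} + \alpha_{n-2} + 1$. The $+1$ must ultimately come from the base-case offset in one of these chains, so I would verify it against the tabulated small values ($\alpha_2 = 4$, $\alpha_3 = 7$, $\alpha_4 = 12$, $\alpha_5 = 20$): indeed $12 = 7 + 4 + 1$ and $20 = 12 + 7 + 1$, which both confirms the formula and pins down where the constant enters.

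The only genuine obstacle here is tracking the index shifts carefully so that the $O$-terms cancel rather than accumulate; there is no new combinatorial idea required, since Lemma~\ref{c} has already classified $\alpha$-cycles by whether they use the last spoke and by the parity of the last BIR part. I would therefore present the proof as a short chain of substitutions, flagging explicitly each appeal to parts~(i) and~(ii) of Lemma~\ref{c} and the defining relation $\alpha_n = E_n + O_n$, and close by checking consistency with the listed initial terms.
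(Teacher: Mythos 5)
Your proposal contains a genuine gap: you try to obtain the recurrence $\alpha_n = \alpha_{n-1} + \alpha_{n-2} + 1$ by a ``short chain of substitutions'' among the identities $\alpha_n = E_n + O_n$, $O_n = \alpha_{n-1}$, $E_n = \alpha_{n-1} - O_{n-2}$ (and the intermediate $E_n = E_{n-1} + E_{n-2}$), but this is impossible in principle. All of these identities are \emph{homogeneous} linear relations, so any consequence obtained by substitution is again homogeneous; no such chain can ever produce the constant $+1$. Indeed, if you carry out your own substitutions to the end, you get $\alpha_n = \alpha_{n-1} + E_n = \alpha_{n-1} + E_{n-1} + E_{n-2} = \alpha_{n-1} + (\alpha_{n-1}-\alpha_{n-2}) + (\alpha_{n-2}-\alpha_{n-3}) = 2\alpha_{n-1} - \alpha_{n-3}$, which is exactly where you started; the step ``should collapse everything to $\alpha_{n-1}+\alpha_{n-2}+1$'' never happens. (Note also that your intermediate move $O_{n-2} = \alpha_{n-1} - E_n$ substituted into $\alpha_n = 2\alpha_{n-1} - O_{n-2}$ just reproduces $\alpha_n = \alpha_{n-1} + E_n$, i.e.\ part (i) plus the definition --- it is circular. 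And proving $E_n = \alpha_{n-2}+1$, which is what you would need, is \emph{equivalent} to the corollary at index $n-1$, so it cannot be a lemma-level substitution.)

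The missing idea is induction, which is what the paper does: from Lemma~\ref{c} one gets the homogeneous recurrence $\alpha_n = 2\alpha_{n-1} - \alpha_{n-3}$; the base case $\alpha_4 = 12 = 7 + 4 + 1$ is checked from the tabulated values; and for $n > 4$ the induction hypothesis at $n-1$, rearranged as $\alpha_{n-1} - \alpha_{n-3} = \alpha_{n-2} + 1$, is substituted into $\alpha_n = \alpha_{n-1} + (\alpha_{n-1} - \alpha_{n-3})$ to give $\alpha_n = \alpha_{n-1} + \alpha_{n-2} + 1$. This is where the $+1$ enters: it propagates from the base case through the induction hypothesis, not from any of the identities of Lemma~\ref{c}. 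Your closing plan --- ``verify against the tabulated small values $\alpha_4 = 12$, $\alpha_5 = 20$'' --- only checks two instances and is not a proof for all $n \geq 4$; without the inductive step your argument establishes only $\alpha_n = 2\alpha_{n-1} - \alpha_{n-3}$, which is strictly weaker (e.g.\ the zero sequence satisfies it but not the claimed recurrence).
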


\begin{proof}
By Lemma \ref{c}, $\alpha _{n}=E_{n}+O_{n}=2\alpha _{n-1}-O_{n-2}=2\alpha
_{n-1}-\alpha _{n-3}.$ We show by induction that, for $n\geq 4,$ $\alpha
_{n}=\alpha _{n-1}+\alpha _{n-2}+1.$ It is $\alpha _{4}=12=7+4+1.$ By the
induction hypothesis, for $n>4,$ $\alpha _{n-1}-\alpha _{n-3}=\alpha
_{n-2}+1.$ Therefore, $\alpha _{n}=2\alpha _{n-1}-\alpha _{n-3}=\alpha
_{n-1}+(\alpha _{n-1}-\alpha _{n-3})=\alpha _{n-1}+\alpha _{n-2}+1.$
\end{proof}

Now it is possible to proceed with counting the number of cycles in $H_{2n}.$

\begin{theorem}
The number of cycles of $H_{2n}$ is:
\end{theorem}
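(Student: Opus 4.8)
The plan is to reduce the total count $c_n$ to a first-order recurrence driven by the already-computed quantity $\alpha_n$, and then to solve that recurrence in closed form using Corollary~\ref{d}. Concretely, I would prove that for $n\ge 3$,
\[
c_n = c_{n-1} + \alpha_n ,
\]
and then feed in the solution of $\alpha_n=\alpha_{n-1}+\alpha_{n-2}+1$ from Corollary~\ref{d}. One can sanity-check the target identity against the tabulated data: $c_3-c_2=7=\alpha_3$, $c_4-c_3=12=\alpha_4$, $c_5-c_4=20=\alpha_5$.

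To establish $c_n=c_{n-1}+\alpha_n$, I would split the cycles of $H_{2n}$ according to whether they use the last spoke $e_{n-1}$. By the Claim, deleting $e_{n-1}$ and suppressing the two resulting degree-$2$ vertices turns $H_{2n}$ into a copy of $H_{2n-2}$; as already noted in the text, this yields a bijection between the cycles of $H_{2n}$ avoiding $e_{n-1}$ and all cycles of $H_{2n-2}$. Hence the number of cycles avoiding $e_{n-1}$ is exactly $c_{n-1}$, and the number of cycles through $e_{n-1}$ is $c_n-c_{n-1}$. It then remains to show that the number of cycles through $e_{n-1}$ equals the number $\alpha_n$ of cycles through $\alpha$. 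For this I would exploit the symmetry of $H_{2n}$: the four edges $\{\alpha,\beta,e_{0},e_{n-1}\}$ distinguished in the Claim play interchangeable roles, and in fact $H_{2n}$ admits an automorphism reversing the spoke-path $I(n)$ (so that $e_i\mapsto e_{n-1-i}$ and $\alpha\leftrightarrow\beta$). Such an automorphism sends cycles to cycles and maps those through one distinguished edge bijectively onto those through another, giving that the number of cycles through $e_{n-1}$ equals the number through $\alpha$, namely $\alpha_n$. An alternative to invoking the automorphism is a purely combinatorial bijection that reverses the BIR (interchanging $F_1$ and $F_k$); by the parity characterization in Corollary~\ref{b} this swaps the role of $\alpha$ and $\beta$ while preserving the existence condition on the middle parts, so it matches $\alpha$-cycles with $\beta$-cycles, and the latter equal the cycles through $e_{n-1}$ by the deletion argument applied to $\beta$.

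Once the recurrence is in hand the rest is routine. From Corollary~\ref{d} the shifted sequence $\alpha_n+1$ satisfies the Fibonacci recurrence, so $\alpha_n$ has a closed form as a Fibonacci number minus a constant (fixed by $\alpha_2=4,\alpha_3=7$); then telescoping $c_n=c_2+\sum_{k=3}^{n}\alpha_k$ and applying the standard partial-sum identity for Fibonacci numbers produces the stated closed form (a Fibonacci number with a linear correction term, with constants pinned down by $c_2=7$). I expect the genuine obstacle to be the symmetry step: rather than merely asserting that $\{\alpha,\beta,e_{0},e_{n-1}\}$ are equivalent, one must exhibit the reversing automorphism of $H_{2n}$ explicitly (or verify the BIR-reversing bijection in detail), and check that it is well defined for both parities of $n$, since the very definition of $e_{n-1}$, and hence the local structure near $\beta$, depends on whether $n$ is odd or even.
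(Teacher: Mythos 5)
Your overall skeleton --- establish $c_n=c_{n-1}+\alpha_n$ and then solve it using Corollary~\ref{d} --- is exactly the paper's, and your closed-form step (shifting $\alpha_n$ by $1$ to get a Fibonacci recurrence, then telescoping with initial values) is routine and correct. The trouble is the step you yourself flagged as the crux. The paper has no ``symmetry step'' at all: it splits the cycles of $H_{2n}$ on the edge $\alpha$ itself, so that the cycles through $\alpha$ number $\alpha_n$ \emph{by definition}, while the cycles avoiding $\alpha$ are exactly the cycles of $H_{2n}-\alpha$, i.e.\ of $H_{2n-2}$ by the Claim; this gives $c_n=c_{n-1}+\alpha_n$ in one line. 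By splitting on $e_{n-1}$ instead, you manufacture an extra obligation --- that the cycles through $e_{n-1}$ number $\alpha_n$ --- and your primary argument for it fails. The automorphism you invoke sends $e_i\mapsto e_{n-1-i}$ and $\alpha\leftrightarrow\beta$, so it matches cycles through $e_{n-1}$ with cycles through $e_0$, and cycles through $\alpha$ with cycles through $\beta$; it says nothing about $e_{n-1}$ versus $\alpha$. Worse, no automorphism carrying $e_{n-1}$ to $\alpha$ exists in general: in $H_8$ the only triangles are $v_0v_1v_7$ and $v_2v_3v_4$, with $\alpha=v_7v_0$ in the first and $e_3=v_2v_4$ in the second, and chasing the forced images of the triangle vertices' external neighbors rules out both ways of mapping $\{v_7,v_0\}$ onto $\{v_2,v_4\}$. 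The Claim asserts that the four edge-deleted (and suppressed) graphs are isomorphic to $H_{2n-2}$; it does not assert that the four edges lie in a single orbit of the automorphism group, and they do not.

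The repair is already hiding in your fallback sentence. For any edge $e$ of any graph, the cycles avoiding $e$ are precisely the cycles of the graph with $e$ deleted, and suppressing degree-$2$ vertices does not change the cycle count; so by the Claim each of $\alpha,\beta,e_0,e_{n-1}$ lies on exactly $c_n-c_{n-1}$ cycles. Applied to $\alpha$ alone this already reads $\alpha_n=c_n-c_{n-1}$, which \emph{is} the recurrence --- making the automorphism, the BIR-reversing bijection, and indeed the whole detour through $e_{n-1}$ unnecessary. With that replacement your argument closes and coincides with the paper's; the only difference left is cosmetic (you telescope $c_n=c_2+\sum_{k=3}^{n}\alpha_k$ against Fibonacci partial sums, whereas the paper proves $c_n=\alpha_{n+2}-(n+3)$ by induction and then solves the $\alpha$-recurrence).
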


\[
c_{n}=\left( \frac{2}{\sqrt{5}}+1\right) \left( \frac{1+\sqrt{5}}{2}\right)
^{n+2}+\left( 1-\frac{2}{\sqrt{5}}\right) \left( \frac{1-\sqrt{5}}{2}\right)
^{n+2}-(n+4).
\]

\begin{proof}
Since the number of cycles of $H_{2n}$ not containing the edge $\alpha $
equals the number of cycles of $H_{2n-2\text{ }}$, then:
\[
c_{n}=c_{n-1}+\alpha _{n}
\]%
To determine $c_{n}$ we first prove:

\[
c_{n}=\alpha _{n+2}-(n+3)\text{ for }n\geq 2.
\]%
Using the initial values given above, the formula can be verified to be true
for $n=2.$ For $n>2,$ by the induction hypothesis and Corollary \ref{d}, it
is $c_{n}=c_{n-1}+\alpha _{n}=\alpha _{n+1}-(n+2)+\alpha _{n}=\alpha
_{n+2}-(n+3).$

Solving the recurrence relation $\alpha _{n}=\alpha _{n-1}+\alpha _{n-2}+1,$
with the initial conditions $\alpha _{2}=4,\alpha _{3}=7,$ one gets:

\[
\alpha _{n}=\left( \frac{2}{\sqrt{5}}+1\right) \left( \frac{1+\sqrt{5}}{2}%
\right) ^{n}+\left( 1-\frac{2}{\sqrt{5}}\right) \left( \frac{1-\sqrt{5}}{2}%
\right) ^{n}-1
\]

which in turn implies:

\[
c_{n}=\left( \frac{2}{\sqrt{5}}+1\right) \left( \frac{1+\sqrt{5}}{2}\right)^{n+2}+
\left( 1-\frac{2}{\sqrt{5}} \right) \left( \frac{1-\sqrt{5}}{2}\right)^{n+2}-(n-4).
\]
\end{proof}

\textbf{Remark.} The ``pseudo'' Fibonacci sequence $\{c_{n}\}$ is
identical to sequence A001924 electronically published at The On-Line
Encyclopedia of Integer Sequences (OEIS) - https://oeis.org/A001924. The
same sequence appears also in page 58 of~\cite{Leon}, and in~\cite{Anton}.

\section{Largest Number of Cycles}

It is widely believed, but not proved, that the largest number of cycles
among all cubic graphs is attained at a hamiltonian graph. In this section it
is shown that the largest number of cycles among all hamiltonian cubic
graphs on $n$ vertices, denoted by $T(n),$ can be upper bounded by $%
T(n)\leq 2^{\frac{n}{2}+1}-f(n),$ where $f$ is an exponential function. We
note that the cyclomatic number $r$ of a cubic graph on $n$ vertices equals $%
\frac{n}{2}+1.$ This result is superseded by Aldred and Thomasen in~\cite{Aldred1}
who proved that $M(r)\leq \frac{15}{16}2^{r}+o(2^{r}).$ However, we believe that
a refinement of the method used here could lead to further improvement of
the above mentioned result.

\begin{theorem}
\label{3}$T(n)<2^{\frac{n}{2}+1}-2^{\frac{n}{2}-2\sqrt{n}-3}.$
\end{theorem}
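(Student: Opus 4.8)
The plan is to bound $T(n)$ by counting cycles in an arbitrary hamiltonian cubic graph $G$ on $n$ vertices via the structure developed in the preliminaries. Fix a hamiltonian cycle $H$ and let $S$ be the set of spokes, so $|S| = n/2$. By Lemma~\ref{1}, each nonempty set $F \subseteq S$ of spokes forms \emph{at most two} cycles. There are $2^{n/2} - 1$ nonempty subsets of $S$, and the empty set together with $H$ contributes the hamiltonian cycle itself, so a crude first bound is $T(n) \le 2(2^{n/2}-1) + 1$. This already has the right leading behavior $2^{n/2+1}$, and the entire content of the theorem is to subtract off an exponentially large correction term $2^{n/2 - 2\sqrt{n} - 3}$ coming from subsets $F$ that form \emph{strictly fewer} than two cycles.

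The next step is to characterize, via Corollary~\ref{b}, exactly when a subset forms fewer than two cycles. A single set of consecutive spokes forms two cycles (Lemma~\ref{a}), but once the BIR $F_1,\dots,F_k$ of $F$ has $k \ge 2$ parts, there is \emph{at most one} $F$-cycle, and one exists only if $|F_i|$ is even for all interior indices $i = 2,\dots,k-1$. So every subset whose BIR has at least two parts already loses at least one cycle relative to the crude bound, and subsets violating the interior-parity condition lose both. The plan is to produce a lower bound on the number of ``deficient'' subsets — those with $k \ge 2$ parts, relative to the cyclic arrangement of spokes around $H$ — and convert each deficiency into a subtraction from $2^{n/2+1}$. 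The cleanest route is to count subsets that are guaranteed to be deficient and show this count is at least $2^{n/2 - 2\sqrt n - 3}$.

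To get a clean exponential count, I would partition the cyclically ordered spokes into consecutive blocks and restrict attention to subsets that are forced to split into many BIR parts. Concretely, set $t \approx 2\sqrt n$ and carve out $t$ disjoint ``gap'' positions spaced around the $n/2$ spokes; a subset that omits all $t$ gap spokes but is otherwise arbitrary on the remaining $n/2 - t$ spokes will, by the spacing, have a BIR with enough parts and uncontrolled interior parities that a definite fraction of them fail to form a second cycle (or any cycle). The number of such subsets is on the order of $2^{n/2 - t}$, and choosing $t \approx 2\sqrt n$ together with the constant factors yields the claimed $2^{n/2 - 2\sqrt n - 3}$. The arithmetic of balancing $t$ against the lost constant factors is routine once the combinatorial scheme is fixed.

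The main obstacle is the combinatorial accounting in the last step: ensuring the deficient subsets are counted \emph{without overcounting} against the factor-of-two bound, and proving that a genuine constant fraction of the chosen subsets really form at most one cycle rather than two. The subtlety is that Corollary~\ref{b} is stated for the graph $H_{2n}$ with its specific spoke structure, whereas here $G$ is an arbitrary hamiltonian cubic graph; I would need the analogues of Lemma~\ref{a} and Corollary~\ref{b} to hold for a general spoke system, which they do since those results rest only on Lemma~\ref{1} and the BIR parity argument, both of which are structure-independent. Verifying that this generality is legitimate, and pinning down the exact spacing $t$ that optimizes the exponent, is where the real work lies.
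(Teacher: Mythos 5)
Your strategy matches the paper's in outline (the crude bound $2^{\frac{n}{2}+1}$ from Lemma~\ref{1}, then subtracting exponentially many ``deficient'' spoke subsets), but the mechanism you propose for producing deficient subsets fails, and the failure is exactly at the point you waved off. Lemma~\ref{a} and Corollary~\ref{b} are \emph{not} structure-independent: they depend on the fact that in $H_{2n}$ the spoke intersection graph $I(n)$ is a path, i.e.\ spokes with consecutive indices cross and all other pairs are parallel. Already Lemma~\ref{a} is false in a general graph: two crossing spokes form two cycles, but two parallel (nested or disjoint) spokes form exactly one, so ``a set of consecutive spokes forms two cycles'' holds only because consecutive spokes of $H_{2n}$ cross. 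Likewise the interior-parity criterion of Corollary~\ref{b} encodes the precise crossing pattern of $H_{2n}$. In an arbitrary hamiltonian cubic graph the spokes form an arbitrary chord diagram on the cycle: there is no canonical linear order of spokes, no BIR, and no parity criterion. Consequently your gap-spoke scheme proves nothing: deleting $t\approx 2\sqrt{n}$ ``gap'' spokes places no constraint on the intersection pattern of the remaining spokes (they could, for instance, pairwise cross), and no argument is offered that a ``definite fraction'' of the remaining subsets forms at most one cycle.

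What is missing is the machinery the paper builds precisely to replace the $H_{2n}$-specific results in the general setting. It defines crossing/parallel spokes and ``consecutive in $F$'' intrinsically from the chord diagram, proves two deficiency criteria valid in \emph{any} hamiltonian cubic graph (Claim~\ref{3}.1: if $e,f$ are consecutive in $F$ and an even number of spokes of $F$ crosses both, then $F$ forms at most one cycle; Claim~\ref{3}.2: if some spoke of $F$ is crossed by no spoke of $F$, the same conclusion), and then --- the step for which your proposal has no substitute --- a pigeonhole argument (Claim~\ref{3}.3): partition the vertex indices into $k=\left\lfloor\sqrt{n}\right\rfloor$ intervals of consecutive integers of length between $k$ and $k+2$; either some spoke has both endpoints in one interval, and deleting the at most $k$ spokes meeting the enclosed arc leaves that spoke uncrossed, or two spokes join $P_{1}$ to a common part $P_{m}$, and deleting at most $2k$ spokes makes them consecutive. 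Either way one obtains a set $F$ with $|F|\geq\frac{n}{2}-2\left\lfloor\sqrt{n}\right\rfloor$ satisfying one of the two criteria. The deficient subsets are then counted exactly, not as a vague fraction: the subsets of $F$ containing $e,f$ together with an even number of the $t$ common crossers number $2^{t-1}\cdot 2^{|F|-t-2}=2^{|F|-3}$ (respectively $2^{|F|-1}$ subsets contain the uncrossed spoke), which yields the subtracted term $2^{\frac{n}{2}-2\sqrt{n}-3}$. Your exponent matches the statement because it was reverse-engineered from it; without Claims~\ref{3}.1--\ref{3}.3 and these exact counts there is no proof behind it.
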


Before proving the statement, some more notions are introduced. Two spokes $%
e=v_{i}v_{j},f=v_{k}v_{m}$ are intersecting if the path $v_{i}-v_{j}$
contains one of the two vertices $v_{k},v_{m},$ and the path $v_{j}-v_{i}$
contains the other of the two vertices, otherwise it is said that they are
parallel. Further, let $F$ be a set of spokes, $e,f\in F.$ Then, $e$ and $f$
are consecutive spokes in $F$, if each $g\in F$, $e\neq g\neq f,$ either
intersects both $e$ and $f$, or $g$ is parallel to both $e$ and $f$. It is
easy to see that if $e,f$ are two spokes of $F$ incident with vertices $%
v_{i},v_{j},v_{k},v_{m}$, where, say $v_{i}\prec v_{j}\prec v_{k}\prec v_{m}$%
, then $e$ and $f$ are consecutive iff either no internal vertex of the
paths $v_{i}-v_{j}$ and $v_{k}-v_{m}$ or no internal vertex of the paths $%
v_{j}-v_{k}$ and $v_{m}-v_{i}$ is incident with a spoke in $F$.

\begin{proof}
We start with a series of claims.

\textbf{Claim~\ref{3}.1.} Let $F$ be a set of spokes so that $e,f$ be
consecutive spokes in $F$, and the number of spokes of $F$ intersecting both
$e$ and $f$ be even. Then $F$ forms at most one cycle in $G$.

\textit{Proof of Claim~\ref{3}.1.} The statement is immediate for $|F|=2$.
Assume now $|F|\geq 3.$ Let spokes $e,f$ be incident to vertices $%
v_{i},v_{j},v_{k},v_{m} $, where, $v_{i}\prec v_{j}\prec v_{k}\prec v_{m}$.
In addition, we suppose WLOG that no internal vertex of the paths $%
v_{i}-v_{j}$ and $v_{k}-v_{m}$ is incident with a spoke in $F$. Since the
number of spokes in $F$ intersecting $e$ and $f$ is even, hence, we have
that if a cycle $C$ formed by the spokes of $F$ contains all edges of the
path $v_{i}-v_{j}$, then $C$ would have to contain also all the edges of the
path $v_{k}-v_{m}$. However, then the two path together with $e$ and $f$
form a cycle. Therefore, at least one of $E_{1},E_{2}$ defined in Lemma \ref%
{1} does not induce a single cycle. Thus, $F$ forms at most one cycle in $G$.

\textbf{Claim~\ref{3}.2.} Let $F,|F|>1,$ be a set of spokes, $e\in F$ be so
that no spoke in $F$ intersects $e$. Then $F$ forms at most one cycle in $G$.

\textit{Proof of Claim~\ref{3}.2.} Let $e=v_{i}v_{j}$. Suppose first that
both $v_{i}-v_{j}$ path and $v_{j}-v_{i}$ path contain an internal vertex
incident to a spoke in $F$. Then clearly $F$ does not form any cycle in $G$.
Otherwise, let no internal vertex of the path $P=v_{i}-v_{j}$ be incident to
a spoke in $F$. Then $e$ and the path $P$ form a cycle, hence at least one
of $E_{1},E_{2}$ does not induce a single cycle.

\textbf{Claim~\ref{3}.3.} There is in $G$ a set $F$ of spokes, $|F|\geq
\frac{n}{2}-2\left\lfloor \sqrt{n}\right\rfloor $ so that either $F$
contains a pair of consecutive spokes, or $F$ contains a spoke that is
intersected by no spoke in $F$.

\textit{Proof of Claim~\ref{3}.3.} First, let $P$ be a partition of the set
$0,1,...,n-1$ into $k$ parts $P_{1},...,P_{k}$, so that $P_{i}$ contains a
set of consecutive integers, and $k\leq |P_{i}|\leq k+2$ for $i=1,...,k$. To
see that such partition is possible, set $k=\left\lfloor \sqrt{n}%
\right\rfloor $. Choose $t$ so that $(t-1)^{2}\leq n<t^{2}$. Then $k=t-1$,
and the existence of $P$ follows from $k(k+2)=(t-1)(t+1)=t^{2}-1\geq n$.

Suppose first that there is a spoke $e=v_{i}v_{j},\in G,i<j$ so that both $%
i,j\in P_{m}$, for some $1\leq m\leq k$. To get a set $F$ with the required
properties it suffices to remove from $S,$ the set of all spokes, those
spokes that are incident to internal vertices of the path $v_{i}-v_{j}$. As
at most $k$ spokes can be removed, and $e$ is not intersected by any spoke
in $F$, the proof follows.

In the other case, let $T$ be the set of spokes incident to vertices in $%
P_{1}$. Since no spoke in $T$ is incident to two vertices in $P_{1}$, by the
pigeon hole principle, there is an $m$, $1<m\leq k$, so that there are at
least two spokes in $T$, say $e=v_{i}v_{a}$ and $f=v_{j}v_{b},$ \thinspace $%
i,j\in P_{1}$ so that $a,b$ belong to $P_{m}.$ Let $i<j,a<b.$ To get a
requested set $F$ of spokes, it suffices to remove from $S$ all spokes
incident to an internal vertex of $v_{i}-v_{j}$, and all spokes incident to
an internal vertex of $v_{a}-v_{b}$. As $|P_{t}|\leq k+2$ for all $t=1,...,k$%
, at most $2k$ spokes can be removed. Thus, $|F|\geq \frac{n}{2}-2k$.

Now we are ready to prove the theorem. Let $F$ be a set of spokes guaranteed
by Claim~\ref{3}.3. Suppose that $F$ contains two consecutive spokes $e$ and $f$,
and that there are $t$ spokes in $F$ intersecting both $e$ and $f$. Then
there are $\frac{1}{2}2^{t}2^{|F|-t-2}=2^{|F|-3}$ subsets of $F$ satisfying
the assumption of Claim~\ref{3}.1, and the statement follows. In the case that $F$
contains a spoke not intersected by any spoke in $F$, then there are $%
2^{|F|-1}$ subsets of $F$ fulfilling the assumptions of Claim~\ref{3}.2. The proof
is complete.
\end{proof}

\bigskip

\section{Computational Results}

%We believe that it will be a very difficult task to determine values of the
%function $T(2n)$.
To provide supporting evidence for Conjecture~\ref{CC} on the smallest number of cycles,
we used an extensive computer search to verify it for all $H_{2n}$, $2n \le 16$. It turns out
that there are several extremal graphs for small values of $2n$ vertices as shown
in Table~\ref{Table1}.
\begin{table}[ht]
\centering % used for centering table
\begin{tabular}{c c} % centered columns (2 columns)
\hline\hline %inserts double horizontal lines
2n & Number of Graphs \\ [0.5ex] % inserts table
%heading
\hline
\hline % inserts single horizontal line
6 & 1 \\
8 & 2 \\
10 & 2 \\
12 & 5 \\
14 & 7 \\
16 & 14 \\ [1ex]
\hline
\hline %inserts single line
\\
\end{tabular}
\caption{Number of Extremal Graphs Exhibiting The Smallest Number of Cycles Including $H_{2n}$} % Table Title
\label{Table1} % is used to refer this table in the text
\end{table}

\bigskip
Figures~\ref{Extremal 8}~to~\ref{Extremal 14} show the extremal graphs for $2n=8$ to $2n=14$ excluding $H_{2n}$.
\begin{figure}[H]
\begin{center}
\includegraphics[width=0.27\textwidth]{./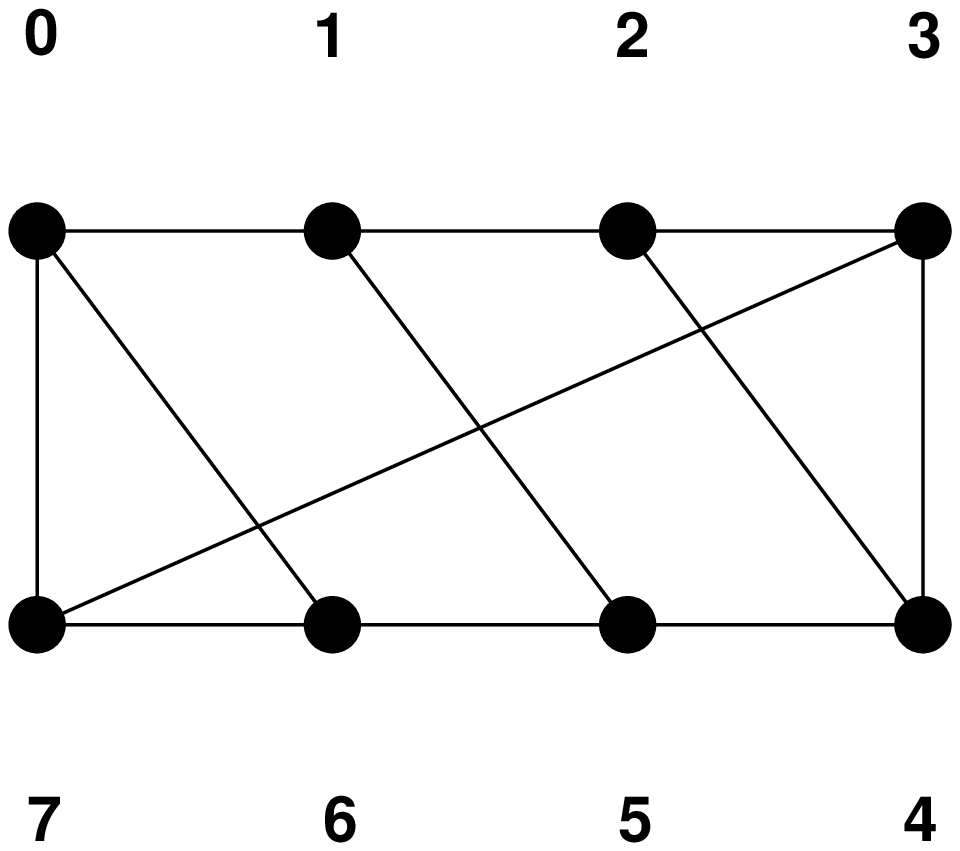}
\end{center}
\caption{The extremal graph other than $H_8$ with the smallest number of cycles for $2n=8$}
\label{Extremal 8}
\end{figure}
\begin{figure}[H]
\begin{center}
\includegraphics[width=0.33\textwidth]{./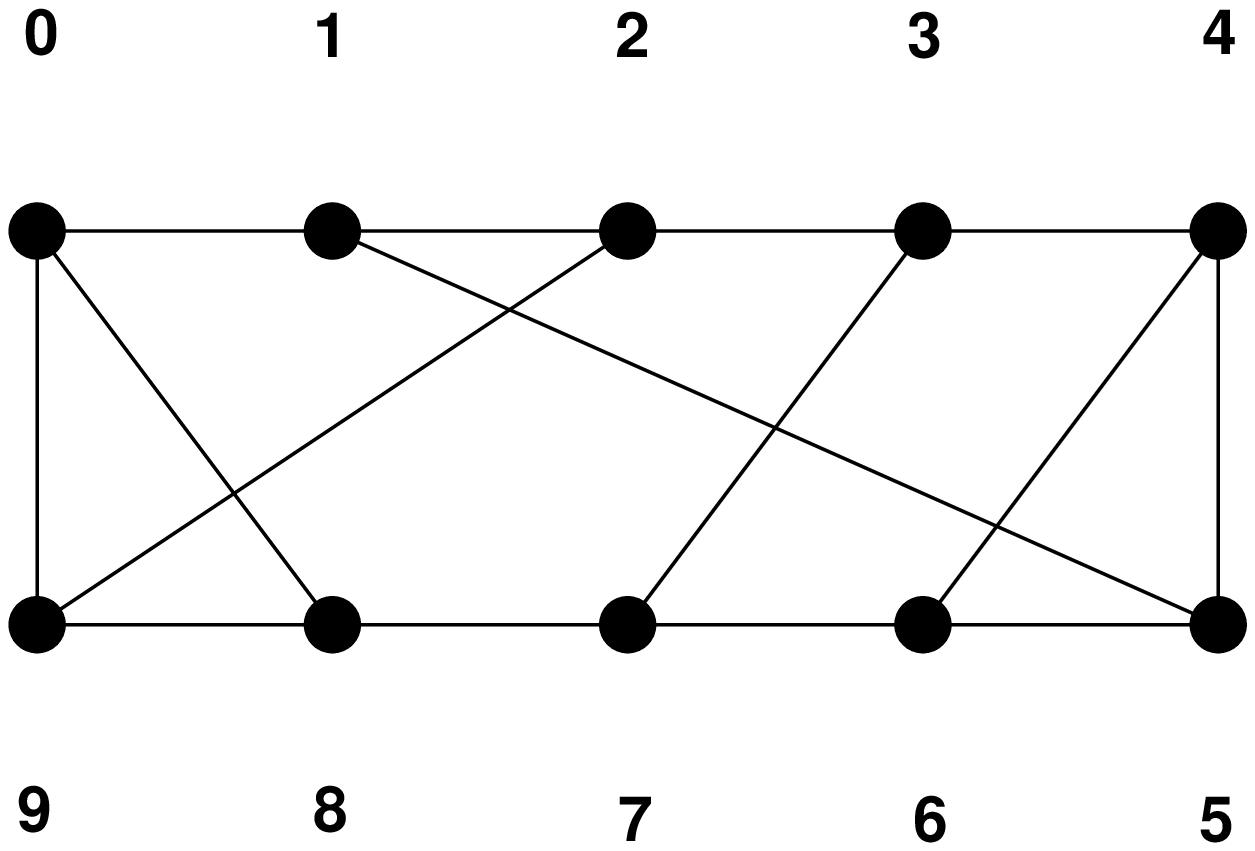}
\end{center}
\caption{The extremal graph other than $H_{10}$ with the smallest number of cycles for $2n=10$}
\label{Extremal 10}
\end{figure}
\begin{figure}[H]
\begin{center}
\includegraphics[width=0.4\textwidth]{./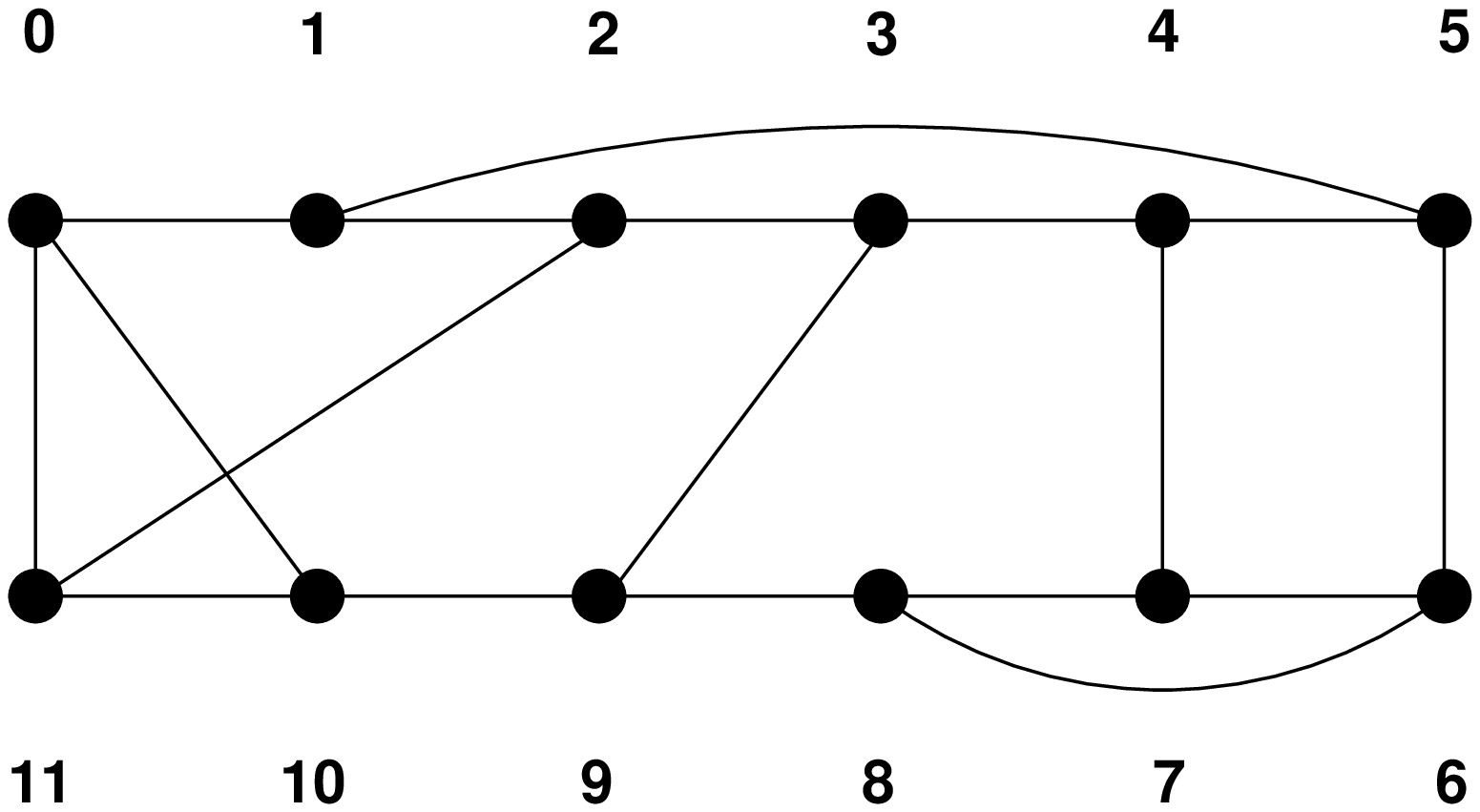}
\hspace{1 cm}
\includegraphics[width=0.4\textwidth]{./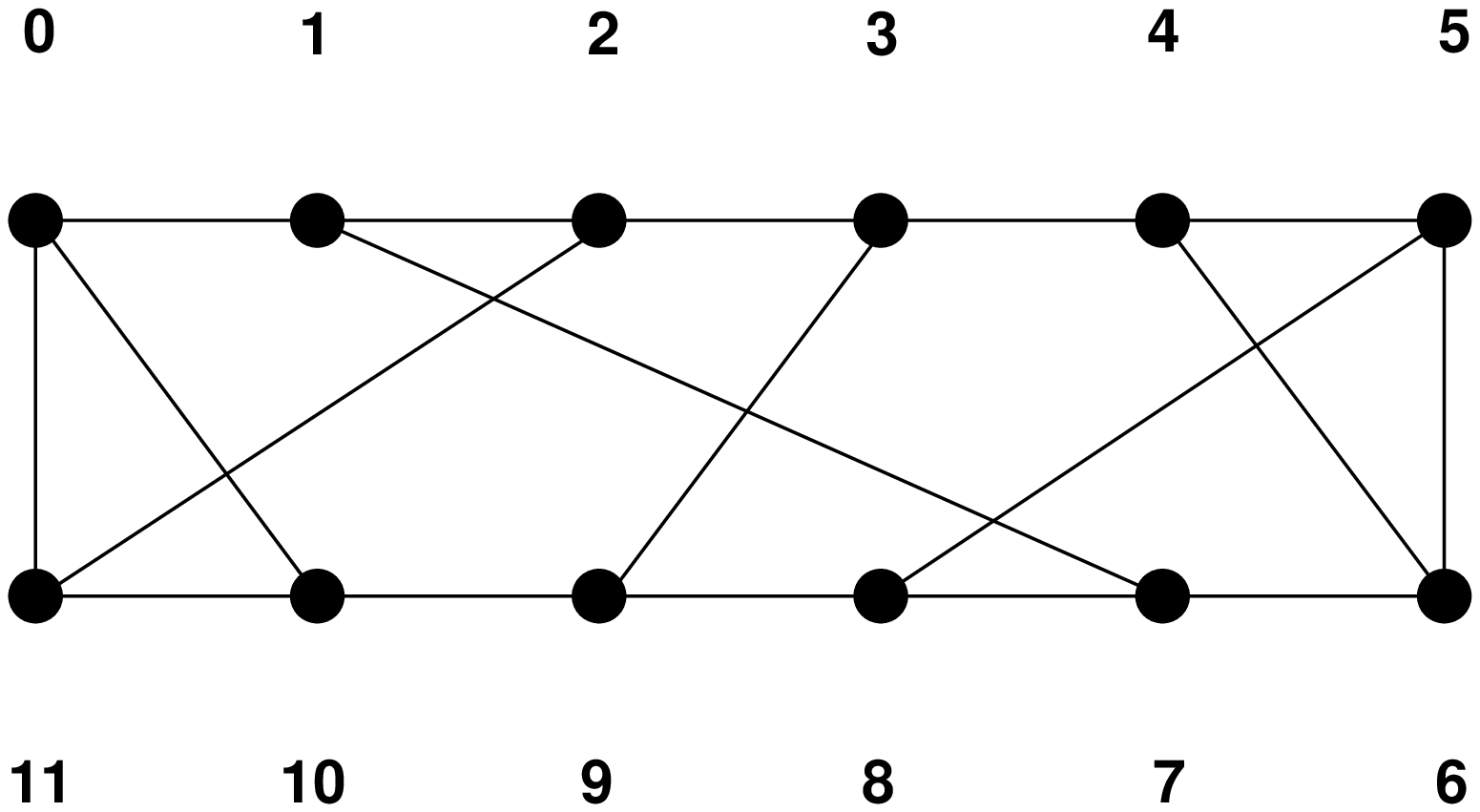}

\bigskip
\bigskip
\includegraphics[width=0.4\textwidth]{./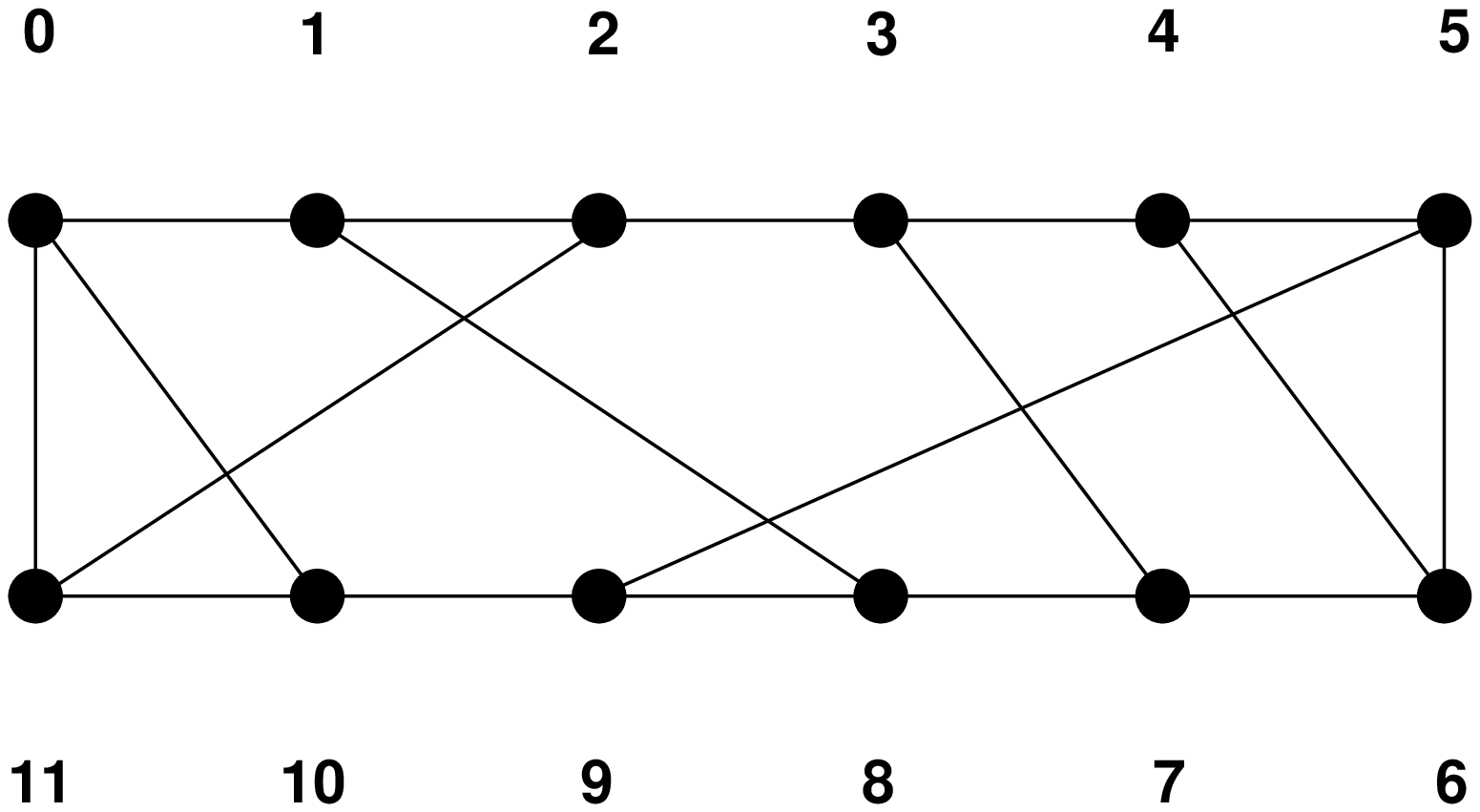}
\hspace{1 cm}
\includegraphics[width=0.4\textwidth]{./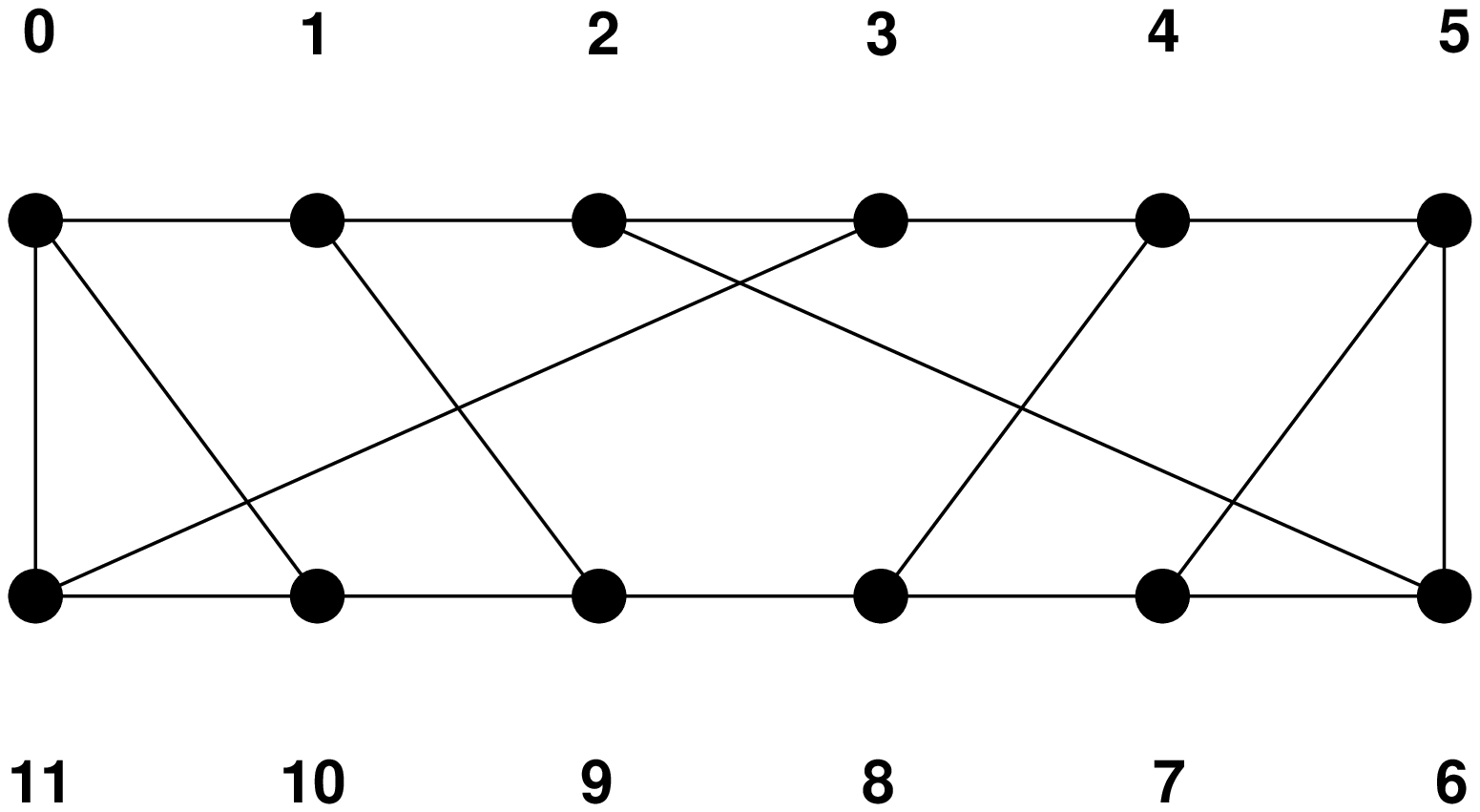}
\end{center}
\caption{The extremal graphs for $2n=12$ with the smallest number of cycles excluding $H_{12}$}
\label{Extremal 12}
\end{figure}
\begin{figure}[H]
\begin{center}
\includegraphics[width=0.42\textwidth]{./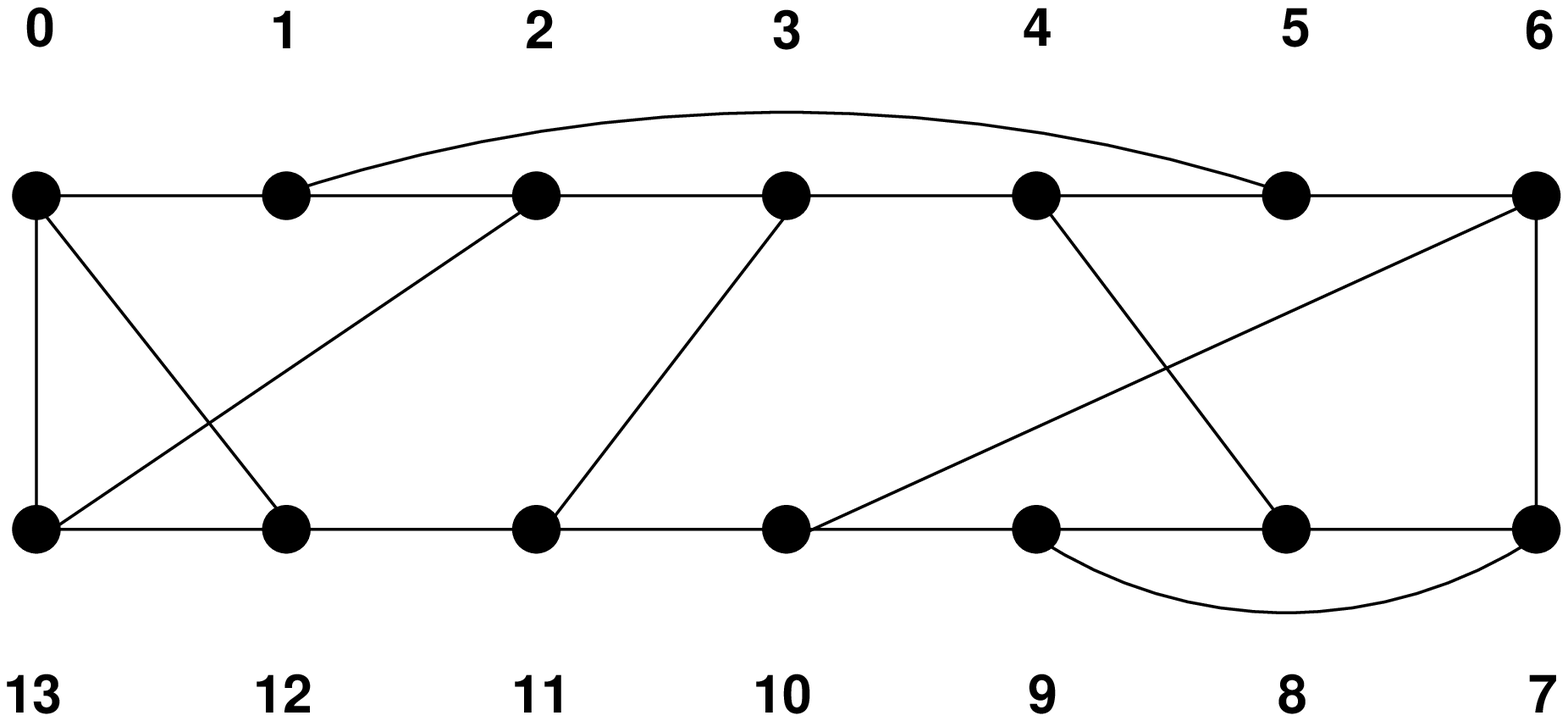}
\hspace{1 cm}
\includegraphics[width=0.42\textwidth]{./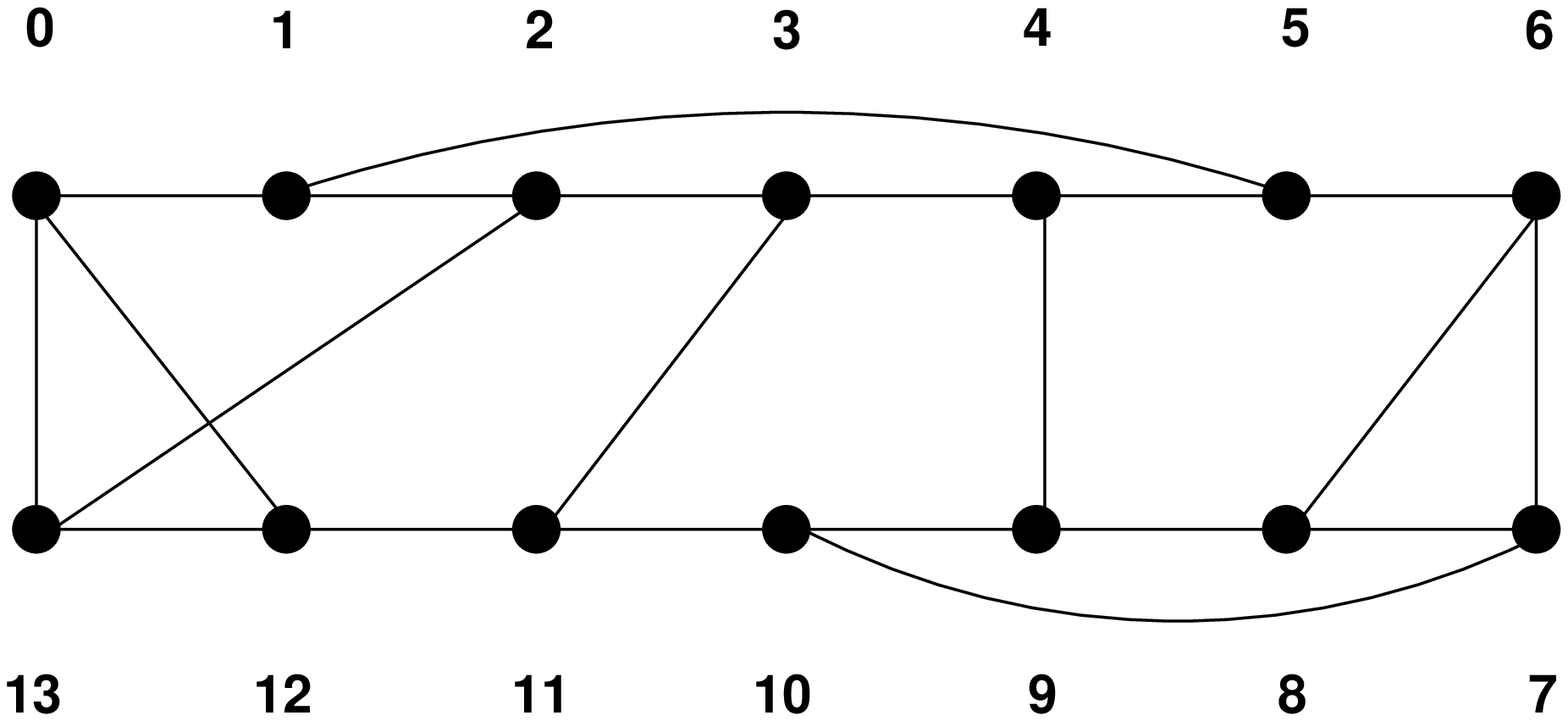}

\bigskip
\bigskip
\includegraphics[width=0.42\textwidth]{./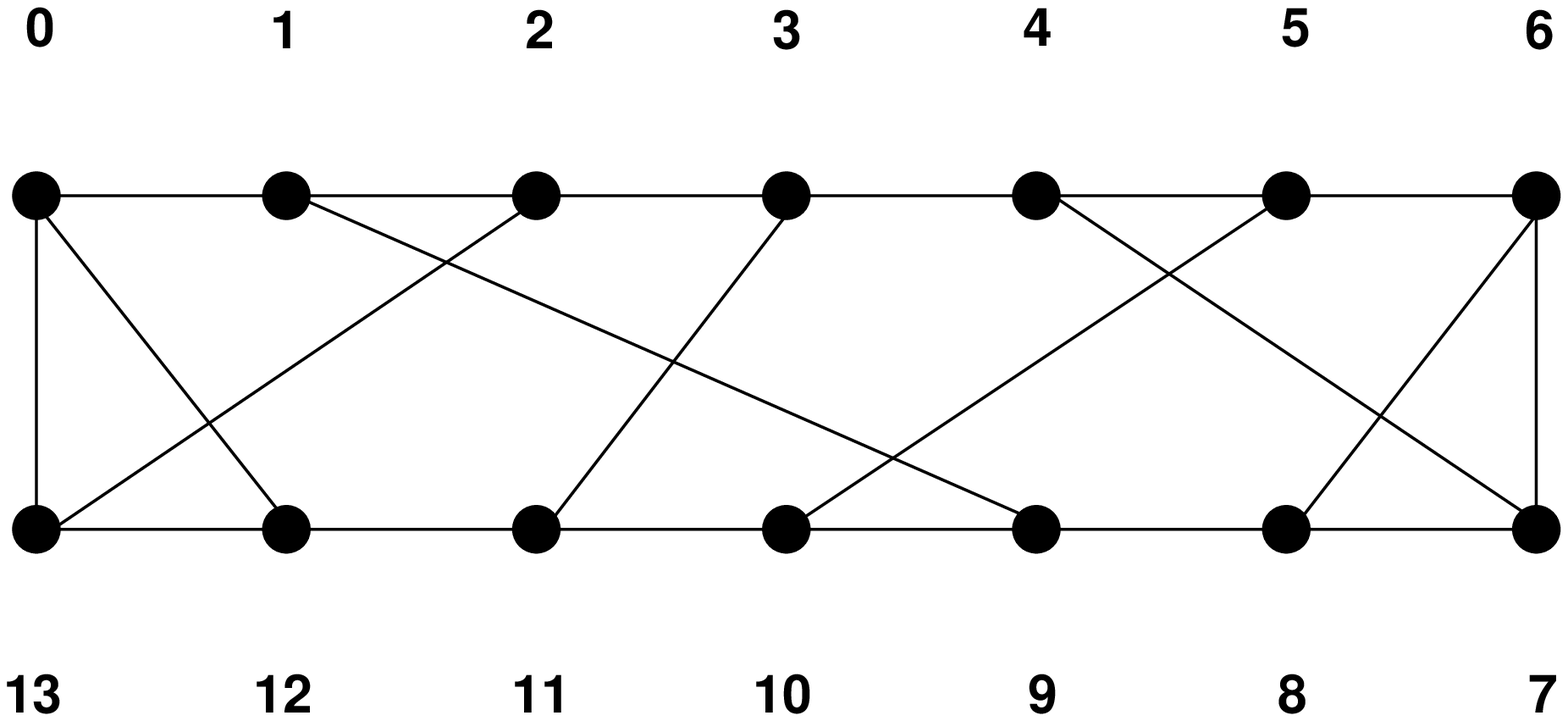}
\hspace{1 cm}
\includegraphics[width=0.42\textwidth]{./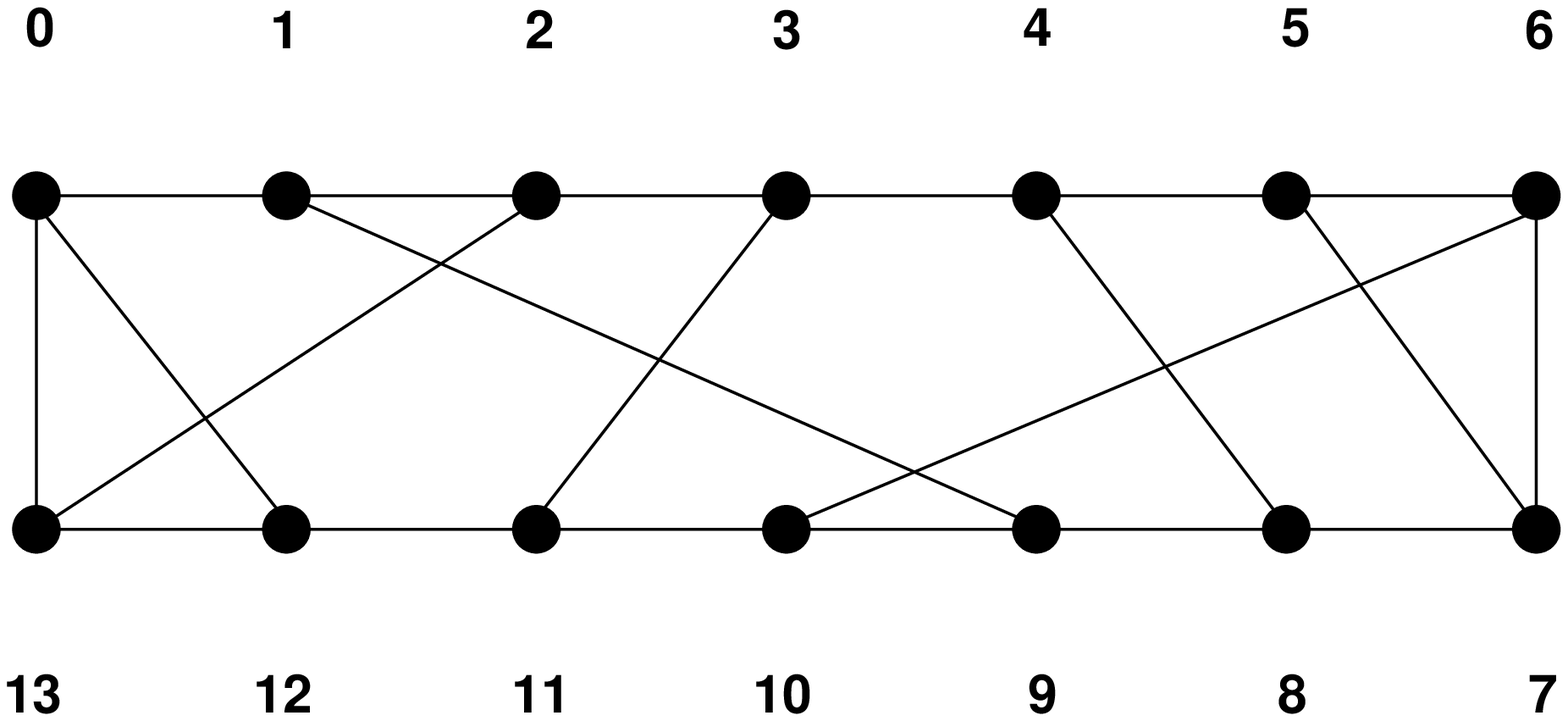}

\bigskip
\bigskip

\includegraphics[width=0.42\textwidth]{./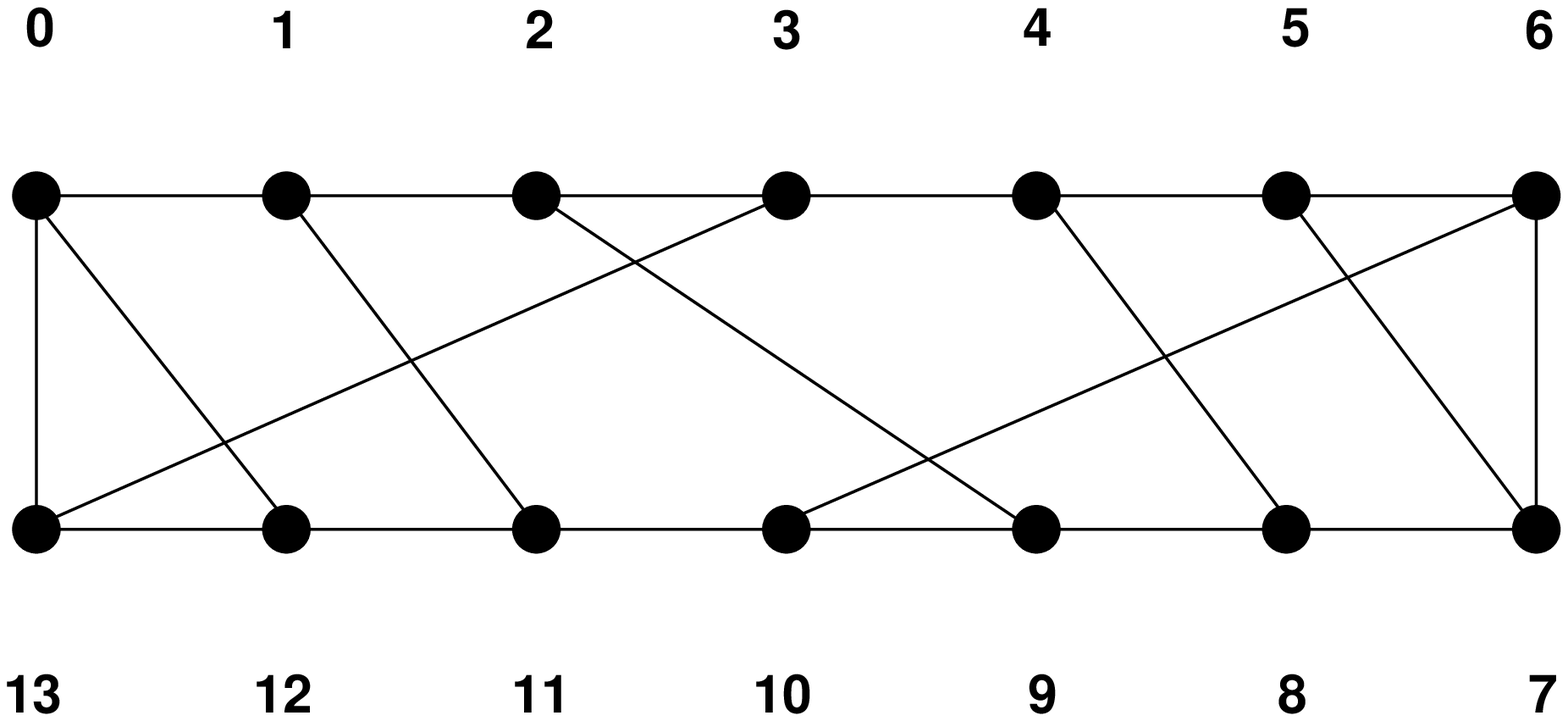}
\hspace{1 cm}
\includegraphics[width=0.42\textwidth]{./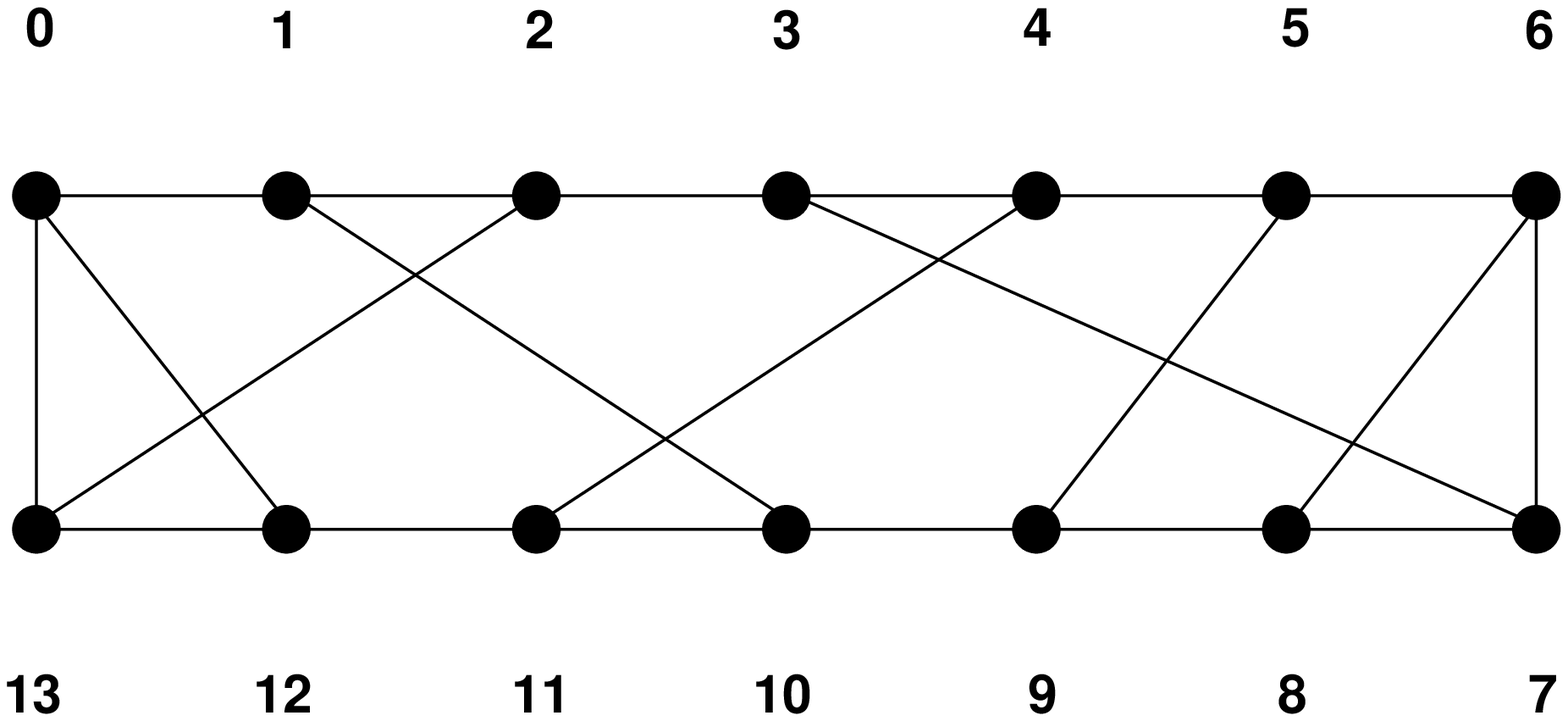}
\end{center}
\caption{The extremal graphs for $2n=14$ with the smallest number of cycles excluding $H_{14}$}
\label{Extremal 14}
\end{figure}
\section{Concluding Remarks}

So far there is no viable conjecture as to the largest number of cycles in
cubic graphs. Guichard~\cite{G} found $T(n)$ by an extensive computer
search for all $n\leq 18.$ Unfortunately, it is not clear from these results
what is the structure of the extremal graph.

Unlike the case of the largest number of cycles, this paper conjectures a structure
$H_{2n}$ for the smallest number of cycles in 3-connected hamiltonian
graphs.
The number of cycles in $H_{2n}$ is derived, and
the conjecture is verified using extensive computer searches for up
to $2n = 16$. The paper also presents a proof technique that could be refined
to improve the known upper bound on the largest number of cycles in a
hamiltonian graph. %

Extensive computer searches shall be carried on in future to
verify the conjecture for~$2n > 16$.
The searches will also find all the
graphs exhibiting the largest number of cycles. Hopefully, this would result
in identifying common extremal graph structures across different graph
sizes. % for the largest $n$ that could practically be handled.
Investigating these structure, if any, could lead to new venues on how to
determine the largest number of cycles for this class of graphs.

\end{document}